\documentclass[a4paper,UKenglish,cleveref, autoref, thm-restate]{lipics-v2019}

\renewcommand{\Pr}[1]{{\mathop{\text{Pr}}\left[#1\right]}}
\newcommand{\Ex}[1]{{\mathop{\mathbb{E}}\left[#1\right]}}

\usepackage{algorithm}
\usepackage{lineno}

\usepackage{pgfplots}
\pgfplotsset{compat=newest}

\nolinenumbers


\bibliographystyle{plainurl}

\title{A Simple Algorithm for Minimum Cuts in Near-Linear Time} 

\titlerunning{Simple Min-Cut in Near-Linear Time}

\author{Nalin Bhardwaj}{University of California San Diego, USA}{nalinbhardwaj@nibnalin.me}{}{}

\author{Antonio J. Molina Lovett}{Department of Computer Science, Princeton University, USA}{antonio@amolina.ca}{}{}

\author{Bryce Sandlund}{Cheriton School of Computer Science, University of Waterloo, Canada}{bcsandlund@gmail.com}{}{}


\authorrunning{N. Bhardwaj, A. J. Molina Lovett, and B. Sandlund}

\Copyright{Nalin Bhardwaj, Antonio J. Molina Lovett, and Bryce Sandlund}

\begin{CCSXML}
<ccs2012>
<concept>
<concept_id>10003752.10003809.10003635</concept_id>
<concept_desc>Theory of computation~Graph algorithms analysis</concept_desc>
<concept_significance>500</concept_significance>
</concept>
</ccs2012>
\end{CCSXML}

\ccsdesc[500]{Theory of computation~Graph algorithms analysis}


\keywords{minimum cut, sparsification, near-linear time, packing}

\category{}

\relatedversion{}

\supplement{}





\EventEditors{Susanne Albers}
\EventNoEds{1}
\EventLongTitle{17th Scandinavian Symposium and Workshops on Algorithm Theory (SWAT 2020)}
\EventShortTitle{SWAT 2020}
\EventAcronym{SWAT}
\EventYear{2020}
\EventDate{June 22--24, 2020}
\EventLocation{T\'{o}rshavn, Faroe Islands}
\EventLogo{}
\SeriesVolume{162}
\ArticleNo{30}

\begin{document}

\maketitle

\begin{abstract}
We consider the minimum cut problem in undirected, weighted graphs. We give a simple algorithm to find a minimum cut that $2$-respects (cuts two edges of) a spanning tree $T$ of a graph $G$. This procedure can be used in place of the complicated subroutine given in Karger's near-linear time minimum cut algorithm~\cite{Karger00}. We give a self-contained version of Karger's algorithm with the new procedure, which is easy to state and relatively simple to implement. It produces a minimum cut on an $m$-edge, $n$-vertex graph in $O(m \log^3 n)$ time with high probability, matching the complexity of Karger's approach.
\end{abstract}

\pagebreak

\section{Introduction}

The minimum cut problem on an undirected (weighted) graph $G$ asks for a vertex subset $S$ such that the total number (weight) of edges from $S$ to $V \setminus S$ is minimized. The minimum cut problem is a fundamental problem in graph optimization and has received vast attention by the research community across a number of different computation models \cite{Karger00,Gabow91,Plotkin92,Karger93a,Karger96,GoranciHT16,Thorup07,Junger00,Chekuri97,Henzinger18,Chalermsook04,Karger99,Geissmann18,Kawarabayashi19,Henzinger17,Daga19,Geissmann17,Nanongkai14,Stoer97,Ghaffari13}. Its applications include network reliability~\cite{Ramanathan87,Karger95}, cluster analysis~\cite{Botafogo93}, and a critical subroutine in cutting-plane algorithms for the traveling salesman problem~\cite{Applegate07}.

A seminal result in weighted minimum cut algorithms is an algorithm by Karger~\cite{Karger00} which produces a minimum cut on an $m$-edge, $n$-vertex graph in $O(m \log^3 n)$ time with high probability\footnote{Probability $1-1/n^c$ for some constant $c$.}. This algorithm stood as the fastest minimum cut algorithm for the past two decades, until very recently, work published on arXiv shaved a log factor in Karger's approach~\cite{Mukhopadhyay19,Gawrychowski19}. The main component of Karger's algorithm is a subroutine that finds a minimum cut that $2$-respects (cuts two edges of) a given spanning tree $T$ of a graph $G$. In other words, the cut found is minimal amongst all cuts of $G$ that cut exactly two edges of $T$. Despite the number of pairs of spanning tree edges totaling $\Omega(n^2)$, Karger shows this can be accomplished in $O(m \log^2 n)$ time. Unfortunately, the procedure developed is particularly complex, a detail Karger admits when comparing the algorithm to a simpler $O(n^2 \log n)$ algorithm he develops to find \emph{all} minimum cuts~\cite{Karger00}. Indeed, perhaps for this reason, implementation of the asymptotically fastest minimum cut algorithm has been avoided in practical performance analyses~\cite{Chekuri97,Junger00}.


In this paper, we give a simple algorithm to find a minimum cut that $2$-respects a spanning tree $T$ of a graph $G$. Our procedure runs in $O(m \log^2 n)$ time, matching the performance of Karger's more-complicated subroutine. We achieve the simplification via a clever use of the heavy-light decomposition. Although our procedure requires the top tree data structure~\cite{AlstrupHLT2005} to achieve optimal performance, at the cost of an extra $O(\log n)$ factor, heavy-light decomposition can be used a second time so that only augmented binary search trees are required. We also give a self-contained version of Karger's algorithm~\cite{Karger00} with this new procedure and implement it, avoiding issues associated with previous implementations~\cite{Karger00,Chekuri97}.


Karger's algorithm~\cite{Karger00}, as well as the edge-sampling technique it is based on~\cite{Karger99}, has been extended and adapted to achieve results in a number of different settings~\cite{Ghaffari13,Daga19,Thorup07,Geissmann18,Geissmann17,Nanongkai14}. In particular, in the fully-dynamic setting, Thorup~\cite{Thorup07} uses the tree-packing technique developed by Karger~\cite{Karger00}, but maintains a larger set of trees so that the minimum cut $1$-respects at least one of them. In the parallel setting, Geissmann and Gianinazzi~\cite{Geissmann18} are able to parallelize both the dynamic tree data structure and the necessary computation required by Karger's algorithm~\cite{Karger00}. This work is based off prior work in the cache-oblivious model~\cite{Geissmann17}, also based on Karger's algorithm~\cite{Karger00}. In the distributed setting, Ghaffari and Kuhn~\cite{Ghaffari13} achieve a $(2+\epsilon)$-approximation to the minimum cut based on Karger's sampling technique~\cite{Karger99}. This is improved to a $(1+\epsilon)$-approximation with similar runtime by Nanongkai and Su~\cite{Nanongkai14}. Nanongkai and Su develop their algorithm from Thorup's fully-dynamic min-cut algorithm~\cite{Thorup07}, Karger's sampling technique~\cite{Karger99}, and Karger's dynamic program to find the minimum cut that $1$-respects a tree~\cite{Karger00}. Finally, Daga et al.~\cite{Daga19} achieve a sublinear time distributed algorithm to compute the exact minimum cut in an unweighted undirected graph. This algorithm builds off a more recent development in minimum cut algorithms~\cite{Kawarabayashi19}, combined again with the tree-packing technique introduced by Karger~\cite{Karger00}. Specifically, a tree packing is found in an efficient number of distributed rounds, then Karger's more-complicated algorithm to find a minimum $2$-respecting cut is applied in the distributed setting.

This vast amount of work based on Karger's original near-linear time algorithm suggests that simplifying it may yield additional techniques that can be applied both sequentially and in alternative settings. Indeed, the very recent improvements to Karger's algorithm~\cite{Mukhopadhyay19,Gawrychowski19} were published on arXiv two months after our paper was first made available online~\cite{Lovett19}, one of which~\cite{Gawrychowski19} cites our paper as what drew the authors to the problem. Indeed, their procedure for ``descendent edges'', given in Section 3.1, is similar to our procedure given in Section~\ref{two}. We have further found use of the approach given in this paper to achieve new results in dynamic higher connectivity algorithms~\cite{Molina19}.

This paper is organized as follows. In Section \ref{related}, we state the history of the minimum cut problem, in particular discussing other simple algorithms. In Section \ref{karger}, we give an overview of Karger's algorithm to pack spanning trees, leaving the details of the approach to Appendix~\ref{kargerdetails}. Our main contribution is given in Sections \ref{one} and \ref{two}. In Section \ref{one}, we show how to find minimum cuts that $1$-respect (cut one edge of) a tree using our new procedure. In Section \ref{two}, we extend the approach to find minimum cuts that $2$-respect (cut two edges of) a tree. We discuss our implementation in Section~\ref{implement} and give concluding remarks in Section~\ref{conclude}. 

\section{Related Work}
\label{related}

Before we begin, we give a brief history of the minimum cut problem. The minimum cut problem was originally perceived as a harder variant of the maximum $s$-$t$ flow problem and was solved by $n \choose 2$ flow computations. Gomory and Hu~\cite{GomoryH61} showed how to compute all pairwise max flows in $n-1$ flow computations, thus reducing the complexity of the minimum cut problem by a $\Theta(n)$ factor. Hao and Orlin~\cite{Hao94} further showed that the minimum cut in a directed graph can be reduced to a single flow computation.

Nagamochi and Ibaraki~\cite{Nagamochi92a,Nagamochi92b} developed a deterministic algorithm that is not based on computing maximum $s$-$t$ flows. They achieve $O(nm + n^2 \log n)$ time on a capacitated, undirected graph. This procedure was simplified by Stoer and Wagner~\cite{Stoer97}, achieving the same runtime. The Stoer-Wagner algorithm gives a simple procedure to find an \emph{arbitrary} minimum $s$-$t$ cut. Vertices $s$ and $t$ are then merged, and the procedure repeats. Although the $O(nm + n^2 \log n)$ time complexity requires an efficient priority queue such as a Fibonacci heap~\cite{Fredman87}, a binary heap can be used to achieve runtime $O(nm \log n)$.

Two algorithms based on \emph{edge contraction} have been devised. The first is an algorithm of Karger~\cite{Karger93a} and is incredibly simple. The algorithm randomly contracts edges until only two vertices remain. Repeated $O(n^2 \log n)$ times, the algorithm finds all minimum cuts on an undirected, weighted graph in $O(n^2 m \log n)$ time with high probability. This technique was improved by Karger and Stein~\cite{Karger96} by observing an edge of the minimum cut is more likely to be contracted later in the contraction procedure. Their improvement branches the contraction procedure after a certain threshold has been reached, spending more time to avoid contracting an edge of the minimum cut when fewer edges remain. The Karger-Stein algorithm achieves runtime $O(n^2 \log^3 n)$, finding the minimum cut with high probability.

In an unweighted graph, Gabow~\cite{Gabow91} showed how to compute the minimum cut in $O(cm\log(n^2/m))$ time, where $c$ is the capacity of the minimum cut. Karger~\cite{Karger99} improved Gabow's algorithm by applying random sampling, achieving runtime $\tilde{O}(m\sqrt{c})$ in expectation\footnote{The $\tilde{O}(f)$ notation hides $O(\log f)$ factors.}. The sampling technique developed by Karger~\cite{Karger99}, combined with the tree-packing technique devised by Gabow~\cite{Gabow91}, form the basis of Karger's near-linear time minimum cut algorithm~\cite{Karger00}. As previously mentioned, this technique finds the minimum cut in an undirected, weighted graph in $O(m \log^3 n)$ time with high probability.

A recent development uses low-conductance cuts to find the minimum cut in an undirected unweighted graph. This technique was introduced by Kawarabayashi and Thorup~\cite{Kawarabayashi19}, who achieve near-linear deterministic time (estimated to be $O(m \log^{12} n)$). This was improved by Henzinger, Rao, and Wang~\cite{Henzinger17}, who achieve deterministic runtime $O(m \log^2 n \;(\log \log n)^2)$. Although the algorithm of Henzinger et al. is more efficient than Karger's algorithm~\cite{Karger00} on unweighted graphs, the procedure, as well as the one it was based on~\cite{Kawarabayashi19}, are quite involved, thus making them largely impractical for implementation purposes.

Since an earlier version of this paper became available online~\cite{Lovett19}, several important improvements in minimum cut algorithms have been discovered. Ghaffari et al.~\cite{ghaffari20} devise a randomized unweighted minimum cut algorithm by using contraction based on sampling from each vertex, rather than standard uniform edge sampling. Their algorithm reduces unweighted minimum cuts to weighted minimum cuts on a graph with $O(n)$ edges, achieving $O(\min(m + n \log^3 n, m \log n))$ time complexity. Gawrychowski et al.~\cite{Gawrychowski19} improve Karger's procedure for finding the minimum cut that $2$-respects a tree to $O(m \log n)$ time. This improves the state-of-the-art for weighted minimum cuts to $O(m \log^2 n)$ time and, by Ghaffari et al.~\cite{ghaffari20}, improves the complexity of unweighted minimum cuts to $O(\min(m + n \log^2 n, m \log n))$ time. Mukhopadhyay and Nanongkai~\cite{Mukhopadhyay19} also study Karger's procedure for finding the minimum cut that $2$-respects a tree, arriving at an $O(m \frac{\log^2 n}{\log \log n} + n \log^6 n)$ time weighted minimum cut algorithm. Mukhopadhyay and Nanongkai further apply their new procedure to minimum cuts in the cut-query and streaming models.

\section{Overview of Karger's Spanning Tree Packing}
\label{karger}

We first formalize the definition mentioned earlier in this paper and originally given by Karger.

\begin{definition}[Karger~\cite{Karger00}]
\label{krespect}
Let $T$ be a spanning tree of $G$. We say that a cut in $G$ $k$-respects $T$ if it cuts at most $k$ edges of $T$. We also say that $T$ $k$-constrains the cut in $G$.
\end{definition}

We also define weighted tree packings.

\begin{definition}[Karger~\cite{Karger00}]
A weighted tree packing is a set of spanning trees, each with an assigned non-negative weight, such that the total weight of trees containing a given edge of $G$ is no greater than the weight of that edge. The weight of the packing is the total weight of the trees in it.
\end{definition}

The first stage of Karger's algorithm is to sample edges independently and uniformly at random from graph $G$ to form a graph $H$, and then pack spanning trees in $H$. If we sample a tree $T$ from a packing with probability proportional to its weight, a minimum cut in $G$ will cut at most two edges of $T$ with constant probability. Thus, if we sample $O(\log n)$ trees from the weighted packing, a minimum cut in $G$ $2$-respects at least one of the sampled trees with high probability. The remainder of the algorithm is a procedure that, given a spanning tree $T$ of a graph $G$, finds a minimal cut of $G$ that $2$-respects $T$. This procedure is applied to all $O(\log n)$ sampled spanning trees.

We leave the intuition behind Karger's approach and the relevant mathematics to Appendix~\ref{kargerdetails}. We will use Algorithm~\ref{packpart} to pack spanning trees, credited to Thorup and Karger~\cite{Thorup00}, Plotkin-Shmoys-Tardos~\cite{Plotkin92}, and Young~\cite{Young95}. The procedure appears in Gawrychowski et al.~\cite{Gawrychowski19}.

\begin{algorithm}[H]
\caption{Obtain a Packing of Weight at least $.4c$ from a Graph $G$}
\label{packpart}
\begin{flushleft}
Let $G$ be a graph with $m$ edges and $n$ vertices.
\begin{enumerate}
  \item Initialize $\ell(e) \leftarrow 0$ for all edges $e$ of $G$. Initialize multiset $P \leftarrow \emptyset$. Initialize $W \leftarrow 0$.
  \item Repeat the following:
  \begin{enumerate}
    \item Find a minimum spanning tree $T$ with respect to $\ell(\cdot)$.
    \item Set $\ell(e) \leftarrow \ell(e) + 1/(75 \ln m)$ for all $e \in T$. If $\ell(e) > 1$, return $W, P$.
    \item Set $W \leftarrow W + 1/(75 \ln m)$.
    \item Add $T$ to $P$.
  \end{enumerate}
\end{enumerate}
\end{flushleft}
\end{algorithm}

\begin{lemma}[\cite{Plotkin92,Thorup00,Young95}]
\label{packpartlemma}
Given an undirected unweighted graph $G$ with $m$ edges, $n$ vertices, and minimum cut $c$, Algorithm~\ref{packpart} returns a weighted packing of weight at least $.4c$ in $O(mc \log n)$ time.
\end{lemma}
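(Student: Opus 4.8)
The plan is to establish two things separately: (i) the packing returned has weight at least $0.4c$, and (ii) the running time is $O(mc\log n)$. For the running time, observe that each iteration of the loop does one minimum spanning tree computation, which takes $O(m\log n)$ time (or $O(m\alpha(n))$ with a more careful implementation, but $O(m\log n)$ suffices). So it remains to bound the number of iterations. Each iteration increments $\ell(e)$ by $1/(75\ln m)$ for the $n-1$ edges of the current tree $T$, and the loop terminates as soon as some $\ell(e)$ exceeds $1$. I would argue that the total amount of ``load'' $\sum_e \ell(e)$ that can be accumulated before termination is $O(nc)$: intuitively, the MST with respect to $\ell$ always picks edges of small load, and because the graph has minimum cut $c$ it is ``$c$-edge-connected enough'' that one cannot keep finding spanning trees avoiding heavily loaded edges for too long. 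More precisely, since each spanning tree added has weight $1/(75\ln m)$ and (by part (i), or by a direct fractional-packing argument) the total weight $W$ accumulated is $O(c)$ — no fractional tree packing can exceed the edge connectivity, which is at most the min cut $c$ — the number of iterations is $O(c\log m) = O(c\log n)$, giving total time $O(mc\log n)$.

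For correctness, part (i), I would invoke the standard multiplicative-weights / Lagrangian-relaxation analysis of the maximum fractional tree packing LP (this is exactly the Plotkin–Shmoys–Tardos / Young framework, and the Thorup–Karger packing lemma). The LP asks to maximize $\sum_T y_T$ over tree weights $y_T\ge 0$ subject to $\sum_{T\ni e} y_T \le 1$ for every edge $e$; its optimum equals the strength of the graph, which for the unweighted case is within a constant factor of the min cut $c$ — in fact Nash-Williams gives that the optimum is at least $c/2$ roughly, and more carefully one gets close to $c$. The algorithm maintains $\ell(e)$ as a proxy for the dual (congestion) variables: choosing the MST under $\ell$ is the separation oracle that finds the least-congested tree, and incrementing $\ell$ along that tree by a small step $\eta = 1/(75\ln m)$ is the multiplicative-weights update (here additive in $\ell$, multiplicative in $\exp(\ell)$). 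The width parameter is $n-1$ edges per tree, the number of constraints is $m$, and the choice of step size $\eta = \Theta(1/\ln m)$ is calibrated so that when the algorithm halts (some $\ell(e)>1$), the accumulated weight $W$ is at least $(1-O(\text{error}))$ times the LP optimum; plugging in constants yields $W \ge 0.4c$. I would not reproduce the full potential-function calculation here but cite \cite{Plotkin92,Young95,Thorup00}.

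The main obstacle is making the constant $0.4$ come out cleanly: it requires both (a) a lower bound on the fractional packing optimum in terms of $c$ (roughly $c$ up to lower-order terms, via the fact that the graph is $c$-edge-connected so by a theorem of Nash-Williams/Tutte it contains many edge-disjoint spanning trees or fractionally close to $c$ of them), and (b) controlling the multiplicative-weights approximation loss so that with $\eta=1/(75\ln m)$ the final weight is within a factor $0.4/\,(\text{opt}/c)$ of optimum. Both are routine within the cited framework but the bookkeeping of constants is the delicate part; fortunately, as noted in the excerpt, the clean statement and this exact algorithm already appear in Gawrychowski et al.~\cite{Gawrychowski19}, so I would follow their presentation for the constant-tracking and focus the original argument only on confirming the $O(mc\log n)$ iteration-count bound.
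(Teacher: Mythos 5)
Your iteration-count bound and correctness argument match the paper's: the paper also bounds the number of iterations by observing that $W$ is the weight of a valid packing, hence $W \le c$ at all times, giving at most $75c\ln m + 1 = O(c\log n)$ iterations; and the paper likewise defers the $0.4c$ lower bound on $W$ to citations of Thorup--Karger, Young, and Plotkin--Shmoys--Tardos rather than reproducing the potential-function analysis. (Your side remark that $\sum_e \ell(e) = O(nc)$ is not needed and is less clean than the $W\le c$ argument you then pivot to, but the pivot is the right one.)

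However, there is a genuine arithmetic gap in your running-time claim. You assert that an $O(m\log n)$ per-iteration MST ``suffices,'' but $O(m\log n)$ per iteration times $O(c\log n)$ iterations is $O(mc\log^2 n)$, not the claimed $O(mc\log n)$. To actually get $O(mc\log n)$ you need the per-iteration cost to be $O(m)$, and this is exactly why the paper explicitly invokes either the Karger--Klein--Tarjan expected-linear-time MST routine or the Gawrychowski et al. implementation trick (keeping only the minimum-$\ell$ representative within each group of parallel edges, so that the MST routine sees effectively $O(m/\log n)$ distinct edges). Your parenthetical $O(m\alpha(n))$ alternative also falls short, since $O(m\alpha(n))\cdot O(c\log n)$ is still not $O(mc\log n)$. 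You should either switch to a linear-time MST subroutine (citing it) or state the weaker $O(mc\log^2 n)$ bound and note where the extra log can be removed.
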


Algorithm~\ref{packpart} and Lemma~\ref{packpartlemma} are given in Appendix~\ref{kargerdetails} with general epsilon and proven. To achieve $O(mc \log n)$ time in Algorithm~\ref{packpart}, we may use a linear time minimum spanning tree routine~\cite{Karger95c} or the following implementation trick given by Gawrychowski et al.~\cite{Gawrychowski19}. In the use of Algorithm~\ref{packpart} in Algorithm~\ref{packing}, the graph in Algorithm~\ref{packpart} has edges which may be duplicated $O(\log n)$ times, while the number of distinct edges can be bounded as a factor $\Theta(\log n)$ fewer. It suffices to invoke the minimum spanning tree algorithm of Algorithm~\ref{packpart} with only the minimum of each set of parallel edges. We can easily maintain the minimum of each set of parallel edges in $O(\log n)$ time per edge per iteration, which suffices to shave a log factor in the runtime of Algorithm~\ref{packpart}. Note that if we chose to avoid these optimizations and/or avoid the use of top trees in Section~\ref{two}, the final runtime becomes $O(m \log^4 n)$.

We use Algorithm~\ref{packpart} in Algorithm~\ref{packing} to obtain $\Theta(\log n)$ trees for the $2$-respect algorithm given in Sections~\ref{one} and~\ref{two}.

\begin{algorithm}[H]
\caption{Obtain $\Theta(\log n)$ Spanning Trees for the $2$-respect Algorithm}
\label{packing}
\begin{flushleft}
Let $d$ denote the exponent in the probability of success $1-1/n^d$. Let $b = 3 \cdot 6^2(d+2) \ln n$.
\begin{enumerate}
  \item Form graph $G'$ from $G$ by first normalizing the edge weights of $G$ so the smallest non-zero edge weight has weight $1$, then multiplying each edge weight by $100$ and rounding to the nearest integer. Let $U$ be an upper bound for the size of the minimum cut of $G'$.
  \item Initialize $c' \leftarrow U$. Repeat the following:
  \begin{enumerate}
    \item Construct $H$ in the following way: for each edge $e$ of $G'$, let $e$ have weight in $H$ drawn from the binomial distribution with probability $p = \min(b/c', 1)$ and number of trials the weight of $e$ in $G'$. Cap the weight of any edge in $H$ to at most $\lceil 7/6 \cdot 12b \rceil$.
    \item Run Algorithm~\ref{packpart} on $H$, considering an edge of weight $w$ as $w$ parallel edges. There are three cases:
    \begin{enumerate}
    	\item If $p = 1$, set $P$ to the packing returned and skip to step 3.
    	\item If the returned packing is of weight $24b/70$ or greater, set $c' \leftarrow c'/6$ and repeat steps 2a and 2b, setting $P$ to the packing returned and then proceeding to step 3.
    	\item Otherwise, repeat steps 2a and 2b with $c' \leftarrow c'/2$.
    \end{enumerate}
  \end{enumerate}
    \item Return $\lceil36.53d\ln n\rceil$ trees sampled uniformly at random proportional to their weights from $P$.
\end{enumerate}
\end{flushleft}
\end{algorithm}

\begin{lemma}
\label{packinglemma}
Algorithm \ref{packing} returns a collection of $\Theta(\log n)$ spanning trees of $G$ in time $O(m \log^3 n)$ such that the minimum cut of $G$ $2$-respects at least one tree in the collection with high probability.
\end{lemma}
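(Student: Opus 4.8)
The plan is to verify three things about Algorithm~\ref{packing}: (i) correctness, i.e.\ with high probability the minimum cut of $G$ $2$-respects one of the returned trees; (ii) the claimed $O(m \log^3 n)$ running time; and (iii) that $\Theta(\log n)$ trees are returned. These follow by assembling Lemma~\ref{packpartlemma} together with Karger's sampling analysis (left to Appendix~\ref{kargerdetails}), so the argument is largely bookkeeping over the cases in step 2b.

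First I would handle the reduction to integer weights in step 1. Normalizing and scaling by $100$ (then rounding) changes the value of every cut by at most a $(1 \pm 1/100)$-type multiplicative factor relative to the scaled optimum, which is too small to affect which cut is minimum once we are only asking for a cut that $2$-respects a tree — more precisely, Karger's analysis only needs that a minimum cut of $G'$ corresponds to a minimum cut of $G$, and the rounding preserves this because the gap introduced is strictly smaller than the granularity of distinct cut values after scaling. The minimum cut $c'$ of $G'$ is then a positive integer bounded by $U$, and $U \le 100 \cdot (\text{sum of normalized weights}) = \mathrm{poly}(n) \cdot W_{\max}$; I would argue $\log U = O(\log n)$ under the standard assumption that weights are polynomially bounded (or absorb $\log U$ otherwise).

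Next comes the geometric search on $c'$ in step 2. The invariant is that $c'$ starts at an upper bound $U \ge c_{G'}$ and only decreases, by factors of $2$ or $6$. I would show: as long as $c' \ge c_{G'}$, sampling each edge into $H$ with probability $p = \min(b/c',1)$ and capping weights yields (by the Chernoff/union-bound argument from Karger's sampling lemma, with $b = 3\cdot 6^2(d+2)\ln n$) a skeleton $H$ whose minimum cut is $\Theta(b)$ and in which, with probability $1 - 1/n^{d+1}$, \emph{every} cut has weight within a $(1\pm 1/6)$ factor of $p$ times its $G'$-weight; the weight cap only removes edges from large cuts and so does not destroy this for the minimum cut side. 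Running Algorithm~\ref{packpart} on this $H$ returns a packing of weight $\ge .4 \cdot c_H$ by Lemma~\ref{packpartlemma}; the threshold $24b/70$ in case~2b(ii) is calibrated so that a packing of at least this weight certifies $c' = \Theta(c_{G'})$, i.e.\ $c'$ is within a constant factor of the true minimum cut, at which point we do one more halving-style refinement ($c' \leftarrow c'/6$) and stop. If the packing is too small, $c'$ was still too large, so we halve and retry (case 2b(iii)); if $p$ hit $1$ we are in the dense regime and stop immediately (case 2b(i)). I would verify the loop terminates after $O(\log U) = O(\log n)$ iterations since $c'$ shrinks geometrically and cannot pass below $\Theta(c_{G'})$ without triggering a stopping case.

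Then the final sampling in step 3: given a packing $P$ of weight $\Theta(b) = \Theta(\log n)$ in which the minimum cut of $G'$ (equivalently $G$) $2$-respects a $\ge$ constant fraction of the trees by weight — this is exactly Karger's tree-packing lemma, that a min cut $2$-respects trees of total weight at least a constant fraction of any packing of weight $\ge (2/3)c$ — sampling $\lceil 36.53 d \ln n\rceil$ trees i.i.d.\ proportional to weight misses all good trees with probability at most $(1-\Omega(1))^{\Theta(d\log n)} \le 1/n^{d+1}$. A union bound over the $O(\log n)$ iterations and the two failure events per iteration gives overall success probability $1 - 1/n^d$, and the count $\Theta(\log n)$ of returned trees is immediate. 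For the running time, each iteration runs Algorithm~\ref{packpart} on $H$, whose minimum cut is $O(b) = O(\log n)$ and which has $O(m \log n)$ edges counted with multiplicity but only $O(m)$ distinct ones, so by Lemma~\ref{packpartlemma} with the parallel-edge optimization each call costs $O(m \cdot \log n \cdot \log n) = O(m \log^2 n)$; over $O(\log n)$ iterations this is $O(m \log^3 n)$, dominating the $O(m)$ cost of steps 1 and~3.

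I expect the main obstacle to be pinning down the constants in the case analysis of step 2b — showing that the thresholds $24b/70$, the factors $6$ and $2$, and the cap $\lceil 7/6 \cdot 12b\rceil$ fit together so that (a) reaching the threshold genuinely certifies $c' = \Theta(c_{G'})$, (b) not reaching it genuinely means $c'$ was too large so halving is safe, and (c) the post-threshold $c' \leftarrow c'/6$ refinement lands in the window where Karger's $2$-respect guarantee (needing packing weight $\ge (2/3)c$) applies. This is where Karger's sampling lemma from Appendix~\ref{kargerdetails} must be invoked with exactly the right parameters; everything else (termination, the union bound, the runtime accounting) is routine once that calibration is in place.
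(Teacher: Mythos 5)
Your high-level plan matches the paper's: use Lemma~\ref{overestimate} (a Chernoff bound) to certify when $c'$ is too large, combine this with the forward direction of Lemma~\ref{sample} to pin $c'$ inside a constant-factor window once the threshold is crossed, then appeal to Lemma~\ref{cuts} plus the sampling of $\lceil 36.53d\ln n\rceil$ trees, and sum the per-iteration costs. Two things need fixing, one substantive and one that you flag but must actually carry out.

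First, the integer-weight step 1 argument is wrong. You claim the rounding ``preserves [which cut is minimum] because the gap introduced is strictly smaller than the granularity of distinct cut values after scaling.'' There is no such granularity guarantee: two cuts in $G$ can have arbitrarily close real-valued weights, and after scaling and rounding their order can flip, so the minimum cut of $G'$ need not coincide with that of $G$. The paper does \emph{not} claim the minimum cut is preserved. Instead it shows the original minimum cut of $G$ becomes an at-most-$\tfrac{201}{199}$-times minimum cut of $G'$ (a cut of weight $\alpha c$ with $\alpha$ slightly above $1$), and then relies on the fact that Lemma~\ref{cuts} applies to any cut of weight $\alpha c$, not just exactly-minimum cuts: a fraction $\tfrac12(3-\alpha/\beta)$ of the packing $2$-constrains it, which stays positive so long as $\alpha < 3\beta$. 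The $\epsilon_1$-slack from rounding is exactly what the $\alpha \le \tfrac{2+\epsilon_1}{2-\epsilon_1}(1+\epsilon_2)$ term in the appendix absorbs. Your direction (``a minimum cut of $G'$ corresponds to a minimum cut of $G$'') is also backwards from what is needed; what matters is that the minimum cut of $G$ remains near-minimal in $G'$.

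Second, the calibration you call ``the main obstacle'' is in fact the core of the lemma and cannot be left as a gesture. Concretely you need: (a) if $c' \ge 6c$ then by Lemma~\ref{overestimate} the min cut of $H$ is below $b/3$ w.h.p., so the packing weight is below the threshold $\tfrac12(1-\epsilon_3)(1+\epsilon_2)^{-1}b$ (this uses $(1+\epsilon_2)^{-1}(1-\epsilon_3) > 2/3$); (b) if $c' \le c$ then by Lemma~\ref{sample} the min cut of $H$ is at least $(1+\epsilon_2)^{-1}b$ w.h.p., so Algorithm~\ref{packpart} returns weight at least the threshold; (c) since $c'$ is halved each round, (a) and (b) together pin the first threshold-crossing $c'$ into $(c/2, 6c)$, so after $c' \leftarrow c'/6$ you land in $(c/12, c)$; (d) with $c' > c/12$ the expected min cut in $H$ is at most $12b$, justifying the cap $\lceil 7/6\cdot 12b\rceil$, and the resulting $\alpha,\beta$ yield $f = 3/2 - (\tfrac{2+\epsilon_1}{2-\epsilon_1})(1+\epsilon_2)(1-\epsilon_3)^{-1} > 0$, from which $t = \lceil -d\ln n/\ln(1-f)\rceil$ trees suffice. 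Also note the packing-weight condition you quote, ``$\ge (2/3)c$,'' is a sufficient but not necessary bound; the paper only guarantees $\beta \ge \tfrac12(1-\epsilon_3) = 2/5$, and positivity of $f$ is what actually matters. Your running-time bound of $O(m\log^2 n)$ per iteration times $O(\log n)$ iterations is a valid (if slightly loose) upper bound; the paper instead observes that $\hat{c}$ roughly doubles per round, giving a geometric series dominated by the last term.
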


Algorithm~\ref{packing} and Lemma~\ref{packinglemma} are given in Appendix~\ref{kargerdetails} with general epsilon and proven.

\section{Minimum Cuts that $1$-Respect a Tree}
\label{one}

We now give our algorithm for finding a minimum cut that $1$-respects a spanning tree $T$ of a graph $G$. We present it here only to build intuition for the idea used to find $2$-respecting cuts in the following section, which also finds $1$-respecting cuts. 

We use the following lemma, a consequence of Sleator and Tarjan's heavy-light decomposition~\cite{SleatorTarjan83}.
\begin{lemma}[Sleator and Tarjan~\cite{SleatorTarjan83}]
\label{order}
Given a tree $T$, there is an ordering of the edges of $T$ such that the edges of the path between any two vertices in $T$ consist of the union of up to $2\log n$ contiguous subsequences of the order. The order can be found in $O(n)$ time.
\end{lemma}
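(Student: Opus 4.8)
The plan is to read the order off the \emph{heavy-light decomposition} of $T$. First I would root $T$ at an arbitrary vertex and, with a single post-order traversal, compute $\mathrm{size}(v)$, the number of vertices in the subtree of each $v$; this takes $O(n)$ time. Call a non-root edge $(v,\mathrm{parent}(v))$ \emph{heavy} if $v$ is a child of $\mathrm{parent}(v)$ of maximum subtree size (ties broken arbitrarily) and \emph{light} otherwise; the heavy edges form vertex-disjoint maximal upward chains, the \emph{heavy paths}. Then I would build the order by a depth-first traversal from the root that always recurses into the heavy child before the light children, let $\mathrm{disc}(v)$ be the discovery time of $v$, index each non-root edge $(v,\mathrm{parent}(v))$ by $\mathrm{disc}(v)$, and take the order on edges induced by this index. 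Constructing it is another $O(n)$-time traversal, so the stated time bound holds.

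The structural fact I would prove is: if $v_1,v_2,\dots,v_k$ is a heavy path listed from top to bottom, then $\mathrm{disc}(v_1),\dots,\mathrm{disc}(v_k)$ are consecutive integers. This is immediate from the traversal rule --- after visiting $v_1$ we dive straight to $v_2$, then $v_3$, and so on, touching nothing hanging off the path until we have reached $v_k$. Consequently the edges $(v_1,\mathrm{parent}(v_1)),\dots,(v_k,\mathrm{parent}(v_k))$ --- that is, all $k-1$ edges of the heavy path \emph{together with} the single light edge entering its top --- occupy one contiguous block of the order, and any contiguous downward sub-chain of the heavy path (plus its entering light edge, when the sub-chain starts at the top) occupies a contiguous sub-block.

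To finish, take vertices $u,v$, let $w=\mathrm{lca}(u,v)$, and split the $u$--$v$ path into the two ancestor-to-descendant halves $u\to w$ and $v\to w$; the edge $(w,\mathrm{parent}(w))$ lies on neither half. Consider one half, the path from a vertex $x$ up to an ancestor $y$. Each vertex on it lies on exactly one heavy path, and two consecutive vertices lie on the same heavy path precisely when the edge joining them is heavy, so the edges of the half break into maximal runs, one per heavy path met; by the structural fact each run is a contiguous subsequence of the order (the light edge crossed to \emph{enter} a heavy path from below has lower endpoint the top of that heavy path, hence is indexed inside that run's range and is absorbed). The number of runs is one more than the number of light edges on the half, and since crossing a light edge at least halves the subtree size of its lower endpoint, an ancestor-to-descendant path carries fewer than $\log_2 n$ light edges; thus each half is a union of at most $\log n$ contiguous subsequences and the whole $u$--$v$ path of at most $2\log n$.

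I do not expect a genuine obstacle: this is essentially the decomposition of Sleator and Tarjan. The one point needing care --- and the reason for indexing each edge by the discovery time of its \emph{lower} endpoint rather than, say, listing heavy path by heavy path with light edges interleaved --- is the absorption step. Without it one gets $O(\log n)$ heavy segments \emph{plus} a separate singleton for every intervening light edge, which is too many by a constant factor. Verifying that the chosen indexing really does place each light edge inside the contiguous range of the heavy path immediately below it, and that splitting at $w$ makes the (otherwise double-counted) edge at $w$ irrelevant, is the only slightly delicate bookkeeping.
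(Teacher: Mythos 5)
Your proof is correct and uses the same approach as the paper: order the edges via the heavy-light decomposition so that each heavy chain occupies a contiguous block. The paper's own proof is only a one-line sketch ("take the heavy paths\ldots and concatenate them in any order"); you additionally pin down the precise ordering (DFS discovery time of the lower endpoint, heavy child first) and verify that each light edge is absorbed into the block of the heavy path directly below it, which is exactly the detail needed to get $2\log n$ rather than $4\log n$ segments and is left implicit in the paper.
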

\begin{proof}
We use heavy-light decomposition, credited to Sleator and Tarjan~\cite{SleatorTarjan83}. Note that the algorithm assumes $T$ is rooted. We can root $T$ arbitrarily. We then take the heavy paths given from the usual construction and concatenate them in any order.
\end{proof}

Our algorithm begins by labeling the edges of $T$ in heavy-light decomposition order $e_1, \ldots, e_{n-1}$ as given by Lemma~\ref{order}. Consider the cut of $G$ induced by the vertex partition resulting from cutting a single edge of $T$. We iterate index $i$ through heavy-light decomposition order and keep up-to-date the total weight of all edges of $G$ that cross the cut induced by $e_i$. The minimum weight found is then returned.

Call the edges of $G$ in $T$ \emph{tree edges} and edges of $G$ not in $T$ \emph{non-tree edges}. Critical to our approach is the following proposition.
\begin{proposition}
\label{nontree}
For any cut of $G$ that $2$-respects $T$, the non-tree edge $uv$ crosses the cut if and only if exactly one tree edge from the $uv$-path in $T$ crosses the cut.
\end{proposition}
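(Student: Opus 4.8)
The plan is to prove both directions at once with a short parity argument along the tree path, using the $2$-respecting hypothesis only at the very end. Fix a cut $(S, V \setminus S)$ of $G$ that $2$-respects $T$, and fix a non-tree edge $uv$. Write the (unique) $uv$-path in $T$ as $u = w_0, w_1, \ldots, w_k = v$. First I would observe that a tree edge $w_{i-1} w_i$ crosses the cut precisely when $w_{i-1}$ and $w_i$ lie on opposite sides of the partition $(S, V \setminus S)$. Hence, walking from $u$ to $v$ along the path, the side (membership in $S$ versus $V \setminus S$) flips exactly at the crossing edges, so $u$ and $v$ lie on opposite sides if and only if the number of crossing edges on the $uv$-path is odd.

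Next I would bring in the $2$-respecting assumption: at most two edges of $T$ cross the cut in total, so in particular at most two edges of the $uv$-path cross it. Combined with the parity observation, the number of crossing path edges lies in $\{0,1,2\}$, and it is odd if and only if it equals $1$. Finally, by definition the non-tree edge $uv$ crosses the cut if and only if $u$ and $v$ lie on opposite sides, which by the previous paragraph happens if and only if exactly one tree edge of the $uv$-path crosses the cut. This establishes the claimed equivalence in both directions simultaneously.

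I do not expect a genuine obstacle here; the argument is elementary. The one point I would be careful to state explicitly is that the parity argument by itself yields only ``an odd number of path edges cross the cut,'' and it is precisely the $2$-respecting hypothesis that upgrades ``odd'' to ``exactly one.'' This is why the proposition is formulated for cuts that $2$-respect $T$ rather than for arbitrary cuts, and flagging that dependence makes clear how the statement will be used in the $2$-respect algorithm of Section~\ref{two}.
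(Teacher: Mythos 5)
Your proof is correct and matches the paper's own argument: both observe that membership in $S$ flips along the $uv$-path exactly at crossing tree edges, so $uv$ crosses the cut iff an odd number of path edges do, and then invoke the $2$-respecting hypothesis to upgrade ``odd'' to ``exactly one.'' You have simply written out the parity argument in more detail than the paper does.
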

\begin{proof}
Recall that for any edge of $T$ crossing the cut, the components of each of its endpoints must fall on opposite sides of the cut. Therefore if the number of tree edges in the cut on the $uv$-path in $T$ is odd, the non-tree edge $uv$ crosses the cut. Since we are only considering cuts that cut at most $2$ edges of $T$, the proposition follows.
\end{proof}

We now give our algorithm explicitly.
\begin{algorithm}[H]
\caption{Minimum Cuts that $1$-Respect $T$}
\label{1respect}
\begin{flushleft}
\begin{enumerate}
  \item Arrange the edges of $T$ in the order of Lemma \ref{order}; label them $e_1, \ldots, e_{n-1}$.
  \item For each non-tree edge $uv$, mark every $i$ such that $e_i$ is on the $uv$-path in $T$ and $e_{i+1}$ is not on the $uv$-path in $T$, or vice versa. Indicate whether edge $e_1$ is on the $uv$-path in $T$.
  \item Iterate index $i$ from $1$ to $n-1$, in each iteration keeping track of the total weight of all non-tree edges $uv$ such that $e_i$ lies on the $uv$-path in $T$, added together with the weight of edge $e_i$.
  \item Return the minimum total weight found in step 3.
\end{enumerate}
\end{flushleft}
\end{algorithm}

\begin{lemma}
\label{1respectlemma}
Algorithm~\ref{1respect} finds the value of the minimum cut that $1$-respects a spanning tree $T$ of a graph $G$ in $O(m \log n)$ time.
\end{lemma}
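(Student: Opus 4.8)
I would prove the two halves of the statement separately: correctness rests on Proposition~\ref{nontree}, and the $O(m\log n)$ bound follows from the $O(\log n)$-block guarantee of Lemma~\ref{order} together with a single left-to-right sweep.

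\textbf{Correctness.} First I would observe that a cut of $G$ that $1$-respects $T$ cuts \emph{exactly} one edge of $T$: cutting zero edges of a spanning tree would leave one side of the partition empty. Hence the value to be returned is $\min_{1\le i\le n-1} C_i$, where $C_i$ denotes the weight of the cut of $G$ induced by deleting $e_i$ from $T$. I would then evaluate $C_i$: the tree edge $e_i$ always crosses this cut, and by Proposition~\ref{nontree} (in the case $k=1$, so that at most one tree edge of any tree path is cut) a non-tree edge $uv$ crosses it precisely when $e_i$ lies on the $uv$-path in $T$, which I write $P(u,v)$. Thus $C_i = w(e_i) + S_i$, where $S_i := \sum_{uv \notin T :\, e_i \in P(u,v)} w(uv)$. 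It then remains to check that the running total in step~3 equals $w(e_i) + S_i$ for every $i$. Step~2 marks an index $i$ for exactly those non-tree edges $uv$ whose membership in $P(u,v)$ differs between $e_i$ and $e_{i+1}$, recording in which direction it changes, so the net contribution of the marks at position $i-1$ is $S_i - S_{i-1}$; combined with the initialization of the running total from $S_1$ (read off from the ``$e_1 \in P(u,v)$'' flags of step~2), a telescoping argument gives that the running total maintained at index $i$ equals $w(e_i) + S_i = C_i$, and minimizing over $i$ establishes correctness.

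\textbf{Running time.} Step~1 costs $O(n)$ by Lemma~\ref{order}. For step~2, Lemma~\ref{order} guarantees that for each non-tree edge $uv$ the set $\{\, i : e_i \in P(u,v) \,\}$ is a union of at most $2\log n$ contiguous blocks of the heavy-light order, and these blocks can be listed in $O(\log n)$ time by following the heavy-light decomposition from $u$ and $v$ up to their lowest common ancestor. Each block yields at most two marked indices (its two endpoints, one where $uv$ enters $P(u,v)$ and one where it leaves), so every non-tree edge produces $O(\log n)$ marks and step~2 produces $O(m\log n)$ marks in total; since the marked indices are integers in $[1,n-1]$, they can be collected and grouped by position in $O(m\log n + n)$ time, cancelling any coincident pair of opposite marks that belong to a common edge. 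Step~3 then does $O(1)$ work per index plus $O(1)$ per mark processed, i.e.\ $O(n + m\log n)$ in total, and step~4 is trivial. Since $G$ is connected we have $m \ge n-1$, so the overall running time is $O(m\log n)$.

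\textbf{Main obstacle.} The one delicate point is step~2: producing, for each non-tree edge, the $\le 2\log n$ heavy-light blocks of its tree path within $O(\log n)$ time, and converting them into correctly signed marks — in particular dealing with the case where two blocks of the same path abut at an index, where the two opposite marks must cancel because both $e_i$ and $e_{i+1}$ then lie on the path. The rest is routine accounting on top of Proposition~\ref{nontree}.
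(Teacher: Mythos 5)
Your proof is correct and takes essentially the same approach as the paper's: correctness via Proposition~\ref{nontree} and the running time via the $O(\log n)$ block bound of Lemma~\ref{order} together with a single sweep over the heavy-light order. Your treatment is somewhat more explicit than the paper's, spelling out the telescoping bookkeeping, the initialization from the ``$e_1 \in P(u,v)$'' flags, and the observation that a $1$-respecting cut must cut \emph{exactly} one tree edge, but the underlying argument is the same.
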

\begin{proof}
Via Proposition~\ref{nontree}, in a $1$-respecting cut including only $e_i$ from $T$, a non-tree edge $uv$ is cut if and only if the edge $e_i$ lies on the $uv$-path in $T$. Algorithm~\ref{1respect} keeps track of all such non-tree edges for each possible $e_i$ that is cut, therefore it finds the minimum cut of $G$ that cuts a single edge of $T$.

The time complexity can be determined as follows. Finding the heavy-light decomposition for step 1 takes $O(n)$ time. In doing so, we can label each edge and each heavy path so that every edge knows its index in the order as well as the heavy path to which it belongs. Each heavy path can store its starting and ending index in the order. With this information, step 2 can be completed by walking up from $u$ and $v$ in $T$ towards the root of $T$. We spend $O(1)$ work per heavy path from root to vertex, which is bounded by $O(\log n)$ via the heavy-light decomposition. In total this step takes $O(m \log n)$ time.

In step 3, we spend $O(n)$ total work plus $O(1)$ work for each transition of the current edge $e_i$ on or off the $uv$ path for all non-tree edges $uv$. Each non-tree edge transitions on or off $O(\log n)$ times as guaranteed by Lemma~\ref{order}, therefore the time complexity of this step is $O(m \log n)$. Overall, Algorithm \ref{1respect} takes $O(m \log n)$ time.
\end{proof}

Note that if we wish to find the edges in the minimum cut, we can keep track of the minimum-achieving index $i$ so we know the vertex separation of the minimum cut. With the vertex separation, it is easy to find in $O(m \log n)$ time which non-tree edges cross the cut.

Further note that we need not know the identity of the non-tree edge $uv$ as $e_i$ falls on or off the $uv$-path. Thus the space required for step 2 need only be $O(m)$, since at each transition point we can just keep track of the total weight added or subtracted from the minimum cut.

\section{Minimum Cuts that $2$-Respect a Tree}
\label{two}

We now discuss an extension of Algorithm~\ref{1respect} to find a minimum cut that $2$-respects a tree. We still iterate $i$ through heavy-light decomposition order, but in addition to cutting $e_i$, we find the best $j$ so that the cut resulting from cutting $e_i$ and $e_j$ is minimal. To find the best $j$ efficiently we use a clever data structure.

\begin{lemma}[Alstrup et al.~\cite{AlstrupHLT2005}]
\label{top}
There is a data structure that supports the following operations on a weighted tree $T$ in $O(\log n)$ time:
\begin{itemize}
  \item \texttt{PathAdd(u, v, x)} := Add weight $x$ to all edges on the unique $uv$-path in $T$.
  \item \texttt{NonPathAdd(u, v, x)} := Add weight $x$ to all edges not on the unique $uv$-path in $T$.
  \item \texttt{QueryMinimum()} := Query for the minimum weight edge in $T$.
\end{itemize}
\end{lemma}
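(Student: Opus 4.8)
The plan is to take the top tree of Alstrup et al.~\cite{AlstrupHLT2005} almost as a black box and attach a constant amount of bookkeeping to each cluster. Recall that a top tree represents the connected tree $T$ by a rooted binary tree of depth $O(\log n)$ whose nodes are \emph{clusters} --- connected subtrees of $T$ meeting the rest of $T$ in at most two \emph{boundary vertices} --- where every internal cluster is formed from its two children by a \texttt{compress} or a \texttt{rake}, and where the root cluster is all of $T$. It supports \texttt{Expose(u,v)} in $O(\log n)$ time, after which the root's boundary vertices are exactly $u$ and $v$, so the root's \emph{cluster path} (the path in $T$ between its boundary vertices) is exactly the $uv$-path; internally, an \texttt{Expose} is carried out by $O(\log n)$ elementary \texttt{merge} and \texttt{split} steps.

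At a cluster $C$ I split its edges into \emph{path edges} (on $C$'s cluster path) and \emph{off-path edges} (the rest), and store the current minimum weights $m^{\mathrm{p}}_C$ over the path edges and $m^{\mathrm{o}}_C$ over the off-path edges (each $+\infty$ if empty), plus two deferred-addition tags $\tau^{\mathrm{p}}_C,\tau^{\mathrm{o}}_C$ recording amounts already added to $m^{\mathrm{p}}_C,m^{\mathrm{o}}_C$ but not yet pushed into $C$'s children. A base cluster is a single edge $e=(a,b)$ with path edge $e$, so $m^{\mathrm{p}}=w(e)$, $m^{\mathrm{o}}=+\infty$, $\tau^{\mathrm{p}}=\tau^{\mathrm{o}}=0$. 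Because $\min(x+\delta,y+\delta)=\min(x,y)+\delta$, the tags compose additively, so lazy propagation is sound. The merge rules: for a \texttt{compress} of children with boundaries $\{a,c\}$ and $\{c,b\}$ into a cluster $C$ with boundary $\{a,b\}$, the cluster path of $C$ is the concatenation of the two child paths and the off-path edges are the union of the two off-path sets, so $m^{\mathrm{p}}_C$ and $m^{\mathrm{o}}_C$ are the respective minima of the children's values, and a \texttt{split} forwards $(\tau^{\mathrm{p}}_C,\tau^{\mathrm{o}}_C)$ unchanged into both children (updating their stored minima and tags) and then zeroes $C$'s tags; for a \texttt{rake} that attaches a cluster $A$ as a pendant onto a boundary vertex of a cluster $B$, the cluster path of the result $C$ is that of $B$ and every edge of $A$ is off-path in $C$, so $m^{\mathrm{p}}_C=m^{\mathrm{p}}_B$ and $m^{\mathrm{o}}_C=\min(m^{\mathrm{o}}_B,m^{\mathrm{p}}_A,m^{\mathrm{o}}_A)$, and a \texttt{split} forwards $(\tau^{\mathrm{p}}_C,\tau^{\mathrm{o}}_C)$ into $B$ as usual but adds $\tau^{\mathrm{o}}_C$ to \emph{both} tags (and both stored minima) of $A$. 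Top-tree cluster reversal merely swaps the two boundary vertices, which preserves the path/off-path split, and $\min$ is symmetric, so reversal needs no special treatment.

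The three operations are now immediate. \texttt{PathAdd(u,v,x)}: run \texttt{Expose(u,v)} and add $x$ to $\tau^{\mathrm{p}}_R$ and $m^{\mathrm{p}}_R$ at the root $R$. \texttt{NonPathAdd(u,v,x)}: run \texttt{Expose(u,v)} and add $x$ to $\tau^{\mathrm{o}}_R$ and $m^{\mathrm{o}}_R$. \texttt{QueryMinimum()}: return $\min(m^{\mathrm{p}}_R,m^{\mathrm{o}}_R)$, which is the minimum over all edges of $T$ since the root cluster is all of $T$. Each operation performs at most one \texttt{Expose} together with $O(1)$ extra work, and within an \texttt{Expose} each \texttt{merge} recomputes $m^{\mathrm{p}},m^{\mathrm{o}}$ in $O(1)$ and each \texttt{split} pushes $\tau^{\mathrm{p}},\tau^{\mathrm{o}}$ down in $O(1)$, so every operation takes $O(\log n)$ time; correctness is just the assertion that the merge/split rules preserve the invariant that $m^{\mathrm{p}}_C,m^{\mathrm{o}}_C$ are the true current minima of $C$ and $\tau^{\mathrm{p}}_C,\tau^{\mathrm{o}}_C$ the additions owed downward.

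I expect the only genuine subtlety to be lazy propagation through \texttt{rake}, where an edge that is off-path in a cluster may have been a path edge of a child: the remedy is exactly the asymmetric \texttt{split} rule above (equivalently, carry a third ``add to every edge'' tag per cluster and route the parent's off-path tag there for the raked child). The \texttt{compress} case is the standard augmentation for maintaining a subtree aggregate under lazy range-add, and the $O(\log n)$ depth, the $O(\log n)$-time \texttt{Expose}, and the \texttt{merge}/\texttt{split} interface are used verbatim from~\cite{AlstrupHLT2005}.
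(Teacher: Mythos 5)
Your construction works and gives the claimed $O(\log n)$ bounds, but it is a genuinely different and much heavier route than the paper's. You re-engineer the top tree's internals: at every cluster you keep separate minima and lazy-addition tags for path edges versus off-path edges and design \texttt{merge}/\texttt{split} rules for \texttt{compress} and \texttt{rake}, and you correctly identify the one delicate point --- under a \texttt{rake} all of the raked child's edges become off-path in the parent, so the parent's off-path tag must flow into both of that child's tags. The paper instead uses the top tree as a black box: it takes \texttt{PathAdd} and \texttt{QueryMinimum} verbatim from Theorems 3 and 4 of~\cite{AlstrupHLT2005} and realizes \texttt{NonPathAdd(u,v,x)} by a global-offset trick: maintain a single scalar $G$ outside the data structure, set $G \leftarrow G + x$ (conceptually adding $x$ to every edge of $T$), and call \texttt{PathAdd(u,v,-x)} to cancel that increase on the $uv$-path; \texttt{QueryMinimum} then returns the stored minimum plus $G$, which is correct because a uniform additive offset commutes with $\min$. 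Your approach is more general --- it would carry over to aggregates that do not commute with a uniform offset, or if one needed separate path and off-path minima --- but for this lemma the black-box reduction is far shorter to state, prove, and implement, which matters given that the paper's explicit goal is simplicity.
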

\begin{proof}
Operations \texttt{PathAdd()} and \texttt{QueryMinimum()} are just Theorems 3 and 4 of \cite{AlstrupHLT2005}. Operation \texttt{NonPathAdd(u, v, x)} can be achieved by keeping a counter of global weight added to (subtracted from) $T$ and executing \texttt{PathAdd(u, v, -x)} to undo this action on the $uv$-path. See also~\cite{Tarjan05}.
\end{proof}

Note that the weight $x$ can be positive or negative.

If we seek to avoid implementing any sophisticated data structures, we can instead use heavy-light decomposition again and support the above two operations in $O(\log^2 n)$ time. To see how, by Lemma~\ref{order} each path of $T$ represents at most $O(\log n)$ contiguous segments of the total order of edges. Range add and a global minimum query can be supported in $O(\log n)$ time via an augmented binary search tree. 
Thus the total time complexity per operation is $O(\log^2 n)$.

We use the range operations as follows. As we iterate index $i$ through the order of Lemma~\ref{order}, we keep up to date the cost of the cut resulting from cutting any other edge $e_j$ via the data structure of Lemma~\ref{top}. Instead of querying each other edge $e_j$ directly, however, we just use a global minimum query to find the best choice of $j$. The procedure is given in Algorithm~\ref{2respect}. The first two steps are the same as Algorithm~\ref{1respect}.

\begin{algorithm}
\caption{Minimum Cuts that $2$-Respect $T$}
\label{2respect}
\begin{flushleft}
\begin{enumerate}
  \item Arrange the edges of $T$ in the order of Lemma \ref{order}; label them $e_1, \ldots, e_{n-1}$.
  \item For each non-tree edge $uv$, mark every $i$ such that $e_i$ is on the $uv$-path in $T$ and $e_{i+1}$ is not on the $uv$-path in $T$, or vice versa. Indicate whether edge $e_1$ is on the $uv$-path in $T$.
  \item Initialize the data structure of Lemma~\ref{top} on $T$ so that the weight of edge $e_j$ is equal to its weight in $T$.
  \item Iterate index $i$ from $1$ to $n-1$. Via the computation done in step 2, maintain the following invariants in the data structure of Lemma~\ref{top} as $i$ is iterated.
  \begin{enumerate}
    \item When edge $e_i$ is on the $uv$-path in $T$, add the weight of non-tree edge $uv$ to all edges off the $uv$-path in $T$.
    \item When edge $e_i$ is off the $uv$-path in $T$, add the weight of non-tree edge $uv$ to all edges on the $uv$-path in $T$.
  \end{enumerate}
Each time $i$ is incremented, after updating weights in Lemma~\ref{top} as per 4a and 4b, add $\infty$ to edge $e_i$, execute \texttt{QueryMinimum()}, then subtract $\infty$ from edge $e_i$. The value of the minimum cut found in each iteration is the result of \texttt{QueryMinimum()} plus the weight of $e_i$.
\item Return the minimum of the smallest cut found in step 4 with the result of \texttt{QueryMinimum()} when we consider edge $e_i$ to be off the path of all non-tree edges $uv$ in the data structure of Lemma~\ref{top}.
\end{enumerate}
\end{flushleft}
\end{algorithm}

\begin{lemma}
\label{2respectlemma}
Algorithm~\ref{2respect} finds the value of the minimum cut that $2$-respects a spanning tree $T$ of a graph $G$ in $O(m \log^2 n)$ time.
\end{lemma}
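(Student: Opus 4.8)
The plan is to prove correctness from Proposition~\ref{nontree} and then bound the running time of Algorithm~\ref{2respect} using Lemma~\ref{order} and Lemma~\ref{top}; throughout I write $w(\cdot)$ for edge weights.

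\textbf{Correctness.} First I would fix two distinct tree edges $e_i, e_j$. Removing them from $T$ splits $T$ into three subtrees, exactly one of which --- the ``middle'' subtree incident to both removed edges --- can be placed on one side of a cut with the other two subtrees on the other side to produce the unique cut of $G$ (up to complementation) that cuts exactly $e_i$ and $e_j$. By Proposition~\ref{nontree}, a non-tree edge $uv$ crosses this cut if and only if exactly one of $e_i, e_j$ lies on the $uv$-path in $T$, so its value equals $w(e_i) + w(e_j)$ plus the total weight of non-tree edges $uv$ for which exactly one of $e_i,e_j$ is on the $uv$-path. The crux is then the invariant that, once the updates of steps~4a--4b have been applied for a fixed $i$, the weight stored for each edge $e_j$ in the structure of Lemma~\ref{top} equals $w(e_j)$ plus the total weight of non-tree edges $uv$ with exactly one of $e_i, e_j$ on the $uv$-path. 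This follows by separating the non-tree edges $uv$ with $e_i$ on the $uv$-path (step~4a adds $w(uv)$ to precisely the edges off that path) from those with $e_i$ off the $uv$-path (step~4b adds $w(uv)$ to precisely the edges on that path): in both cases $e_j$ receives $w(uv)$ exactly when its membership in the $uv$-path differs from that of $e_i$. The invariant is preserved across increments of $i$ because, by the marks from step~2, only $O(1)$ non-tree edges change on/off status between $e_i$ and $e_{i+1}$, and for each such $uv$ we undo its old contribution with a \texttt{PathAdd} or \texttt{NonPathAdd} of weight $-w(uv)$ and apply its new contribution. Adding $\infty$ to $e_i$ before \texttt{QueryMinimum()} makes the query report the minimum stored weight over $j \neq i$; adding back $w(e_i)$ gives exactly the cut value above minimized over $j$, and the minimum over all $i$ is the smallest cut that $2$-respects $T$ and cuts two tree edges.

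\textbf{Cuts of at most one tree edge.} The zero-edge cut has value $0$ and is the minimum only in a disconnected graph. A cut that cuts only $e_k$ has value $w(e_k)$ plus the weight of non-tree edges whose $uv$-path contains $e_k$, again by Proposition~\ref{nontree}; this is exactly the stored weight of $e_k$ after step~4b's rule is applied to \emph{every} non-tree edge, i.e., when $e_i$ is treated as off the path of all of them, so the corresponding \texttt{QueryMinimum()} in step~5 returns the best one-edge cut. Taking the minimum of this with the best two-edge cut from step~4 yields the value of the minimum cut that $2$-respects $T$.

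\textbf{Running time.} Step~1 is $O(n)$ by Lemma~\ref{order}. Step~2 takes $O(m\log n)$, by the same walk-up-the-heavy-paths argument used in the proof of Lemma~\ref{1respectlemma}. Step~3 builds the structure of Lemma~\ref{top} in $O(n\log n)$ time. In step~4 the total number of on/off transitions summed over all non-tree edges is $O(m\log n)$ by Lemma~\ref{order}; each transition triggers $O(1)$ calls to \texttt{PathAdd}/\texttt{NonPathAdd}, each running in $O(\log n)$ by Lemma~\ref{top}, and the $O(n)$ iterations contribute $O(n)$ further single-edge updates and \texttt{QueryMinimum()} calls, for $O(m\log^2 n)$ overall. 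Hence the algorithm runs in $O(m\log^2 n)$ time. (Replacing the top tree by a second heavy-light decomposition makes each range operation cost $O(\log^2 n)$ and the bound $O(m\log^3 n)$.)

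\textbf{Main obstacle.} The delicate part is the step~4 bookkeeping: verifying that every on/off transition of a non-tree edge is matched with the correct ``undo old, apply new'' pair of range operations so the stored-weight invariant survives, and that the $\pm\infty$ device together with step~5 cleanly separates the exactly-two-edge and at-most-one-edge cases without double counting or omission. The remaining time analysis is a routine count of data-structure operations.
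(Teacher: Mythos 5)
Your proof is correct and follows essentially the same approach as the paper's: it reduces correctness to Proposition~\ref{nontree}, maintains the same stored-weight invariant in the Lemma~\ref{top} structure, handles the one-tree-edge case via step~5, and counts the $O(m\log n)$ on/off transitions times $O(\log n)$ per operation for the $O(m\log^2 n)$ bound; you merely spell out the invariant and the $\pm\infty$ trick in more detail than the paper does. One small slip: you claim that ``only $O(1)$ non-tree edges change on/off status between $e_i$ and $e_{i+1}$,'' which is false in general (a single boundary index can be marked by $\Omega(m)$ non-tree edges, e.g.\ parallel edges), but this does not affect your argument since the invariant preservation only needs each transition to be handled by the correct undo/apply pair, and your running-time section correctly bounds the \emph{total} number of transitions by $O(m\log n)$.
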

\begin{proof}
By Proposition~\ref{nontree}, in a $2$-respecting cut including $e_i$ and $e_j$ of $T$, a non-tree edge $uv$ is cut if and only if exactly one of $e_i$ or $e_j$ lies on the $uv$-path in $T$. Observe that the invariants enforced in step 4 guarantee that in each iteration the total weight of edges from the cut resulting from cutting any other edge $e_j$ along with $e_i$ is kept up-to-date in the data structure of Lemma~\ref{top}. Since the minimum such $j$ is found for every $i$, it follows that step 4 finds the weight of the minimum cut of $G$ that cuts exactly two edges of $T$. In step 5, we return the minimum of this weight with a single call to \texttt{QueryMinimum()} where we assume edge $e_i$ to be off the path of all non-tree edges $uv$. Observe that this computes the minimum cut of $G$ that cuts exactly one edge of $T$. Thus, the minimum cut of $G$ that $2$-respects $T$ is returned in step 5.

The time complexity follows similarly to Algorithm~\ref{1respect}. Steps 1 and 2 take $O(m \log n)$ total time. However, step 4 requires $O(\log n)$ time for non-tree edge $uv$ whenever edge $e_i$ falls on or off the $uv$-path in $T$, since the data structure of Lemma~\ref{top} takes $O(\log n)$ time per operation. For a given non-tree edge $uv$, edge $e_i$ falls on or off the $uv$-path in $T$ a total of $O(\log n)$ times by Lemma~\ref{order}; thus step 4 takes $O(m \log^2 n)$ time. The final \texttt{QueryMinimum()} call in step 5 takes $O(m \log n)$ time. The total time taken is $O(m \log^2 n)$.
\end{proof}

We make a few further remarks about Algorithm~\ref{2respect}. To determine the edges of the minimum cut, the data structure of Lemma~\ref{top} can be augmented to return the index $j$ of the edge that achieves the minimum given in operation \texttt{QueryMinimum()}. With $e_i$ and $e_j$, we can determine the vertex partition in $G$ of the minimum cut and, as stated in Section~\ref{one}, and from this we can find which non-tree edges cross the minimum cut easily in $O(m \log n)$ time.

The space complexity of Algorithm~\ref{1respect} was easily linear. In Algorithm~\ref{2respect}, we must know the identity of each non-tree edge $uv$ in every transition point where edge $e_i$ falls on or off the $uv$-path. Naively this costs $O(m \log n)$ space. This can be improved to $O(m)$ space by performing step 2 incrementally while executing step 4. That is, we only need to know the next transition point where the non-tree edge $uv$ falls on or off the $uv$-path, and from the current transition point this can be determined in constant time via the heavy-light decomposition.

Recall that while Algorithm~\ref{1respect} helped demonstrate the approach of Algorithm~\ref{2respect}, we need only implement Algorithm~\ref{2respect}, since Algorithm~\ref{2respect} finds the minimum cut of $G$ that cuts either $1$ or $2$ edges of $T$.

From this we get our final theorem, equivalent to the result of Karger~\cite{Karger00}.

\begin{theorem}
The minimum cut in a weighted undirected graph can be found in $O(m \log^3 n)$ time with high probability.
\end{theorem}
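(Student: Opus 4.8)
The plan is to compose the tree-packing of Section~\ref{karger} with the $2$-respect subroutine of Section~\ref{two}. First I would run Algorithm~\ref{packing} on $G$; by Lemma~\ref{packinglemma} this produces, in $O(m \log^3 n)$ time, a collection $\mathcal{T}$ of $\Theta(\log n)$ spanning trees of $G$ such that, with high probability, the minimum cut of $G$ $2$-respects at least one tree of $\mathcal{T}$. Then, for each $T \in \mathcal{T}$, I would invoke Algorithm~\ref{2respect} to compute the weight of the minimum cut of $G$ that $2$-respects $T$. By Lemma~\ref{2respectlemma} each such call costs $O(m \log^2 n)$ time, and since $|\mathcal{T}| = \Theta(\log n)$, this second phase costs $O(m \log^3 n)$ in total. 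The algorithm outputs the smallest of the $\Theta(\log n)$ values returned (and, via the augmentation remark following Lemma~\ref{2respectlemma}, the associated vertex partition).

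For correctness, let $\lambda$ denote the weight of the minimum cut of $G$. Every value returned by a call to Algorithm~\ref{2respect} is the weight of a genuine cut of $G$ (one that $2$-respects the tree passed to it), hence is at least $\lambda$; so the reported minimum is at least $\lambda$. Conversely, condition on the high-probability event of Lemma~\ref{packinglemma} that some $T^\star \in \mathcal{T}$ has an optimal cut $2$-respecting it. Then Lemma~\ref{2respectlemma}, applied to $T^\star$, guarantees the corresponding call returns exactly $\lambda$, since it finds the minimum over \emph{all} cuts of $G$ that cut at most two edges of $T^\star$, and that family contains an optimal cut. Hence, on this event, the overall output equals $\lambda$. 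The only randomness in the entire procedure resides in Algorithm~\ref{packing}; Algorithm~\ref{2respect} is deterministic, so the failure probability of the whole algorithm is just that of Lemma~\ref{packinglemma}, namely $1/n^{d}$ for the chosen constant $d$, and no further union bound is required.

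The running time is $O(m \log^3 n)$ for the packing (Lemma~\ref{packinglemma}) plus $O(m \log^3 n)$ for the $\Theta(\log n)$ invocations of Algorithm~\ref{2respect} (Lemma~\ref{2respectlemma}), plus $O(m)$ for bookkeeping, for $O(m \log^3 n)$ overall. There is no genuinely hard remaining step: the theorem is a direct composition of Lemmas~\ref{packinglemma} and~\ref{2respectlemma}, with the two facts that could require care already discharged inside those lemmas — namely that the scaling and rounding of edge weights in Algorithm~\ref{packing} (the construction of $G'$) preserves enough structure that the \emph{original} minimum cut still $2$-respects a sampled tree, and that the $O(\log n)$-per-operation cost of the data structure of Lemma~\ref{top} is what makes the $2$-respect phase fit in $O(m \log^3 n)$. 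If one instead uses heavy-light decomposition a second time in place of top trees, the per-operation cost rises to $O(\log^2 n)$ and the final bound degrades to $O(m \log^4 n)$, as noted earlier in the paper.
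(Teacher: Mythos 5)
Your proof is correct and follows essentially the same route as the paper: invoke Algorithm~\ref{packing}, run Algorithm~\ref{2respect} on each of the $\Theta(\log n)$ returned trees, and combine Lemmas~\ref{packinglemma} and~\ref{2respectlemma}. You simply flesh out the correctness argument (lower bound from validity of each returned cut, upper bound from the high-probability event) that the paper leaves implicit.
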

\begin{proof}
We first find $\Theta(\log n)$ spanning trees by Algorithm~\ref{packing}. We then find the minimum cuts that $2$-respect each of these trees by Algorithm~\ref{2respect}. By Lemmas \ref{packinglemma} and \ref{2respectlemma}, this finds the minimum cut with high probability in $O(m \log^3 n)$ time.
\end{proof}

\section{Implementation}
\label{implement}

We have implemented an $O(m \log^4 n)$ version of our algorithm in C++\footnote{Our implementation is available at: \url{https://github.com/nalinbhardwaj/min-cut-paper}.}. Algorithm~\ref{packpart} together with an $O(m \log n)$ minimum spanning tree routine take about 100 lines of code, Algorithm~\ref{packing} takes about 200 lines, Algorithm~\ref{2respect} takes about 200 lines, and using an augmented binary search tree as the data structure for Lemma~\ref{top} takes about 200 lines. To the best of our knowledge, our implementation is the first to achieve near-linear time complexity. We have tested it against an $O(n^3)$ implementation of the Stoer-Wagner algorithm~\cite{Stoer97} and an $O(n^3 \log n)$ implementation of Karger's randomized contraction algorithm~\cite{Karger93a}. Under favorable inputs, the runtime compares as in Figure~\ref{graph}.

\begin{figure}
	\centering
	\begin{tikzpicture}
	\begin{axis}[legend pos=outer north east, title={\textbf{Performance Comparison}},
	xlabel={$n$}, ylabel={Runtime in seconds}]
	\addplot+[smooth,mark=*] plot coordinates
	{ 	(1000, 36.741) 
		(1100, 42.764)
		(1200, 47.679)
		(1300, 53.378)
		(1400, 55.860)
		(1500, 62.457)
		(1600, 64.776)
		(1700, 60.313)
		(1800, 75.481)
		(1900, 78.935)
		(2500, 131.265)
	};
	\addlegendentry{Our Algorithm}
	\addplot+[smooth,mark=x] plot coordinates
	{ 	(1000, 20.397) 
		(1100, 29.253)
		(1200, 35.580)
		(1300, 45.768)
		(1400, 56.803)
		(1500, 70.090)
		(1600, 82.602)
		(1700, 98.456)
		(1800, 116.289)
		(1900, 140.121)
		(2500, 318.721) };
	\addlegendentry{Stoer-Wagner~\cite{Stoer97}}
	\addplot+[smooth,mark=+] plot coordinates
	{ 	(1000, 19.482) 
		(1100, 30.780)
		(1200, 42.528)
		(1300, 58.399)
		(1400, 73.294)
		(1500, 89.544)
		(1600, 106.710)
		(1700, 124.241)
		(1800, 133.266)
		(1900, 160.230)
		(2500, 322.708) };
	\addlegendentry{Karger~\cite{Karger93a}}
	\end{axis}
	\end{tikzpicture}
	\caption{Performance comparison of an $O(m \log^4 n)$  implementation of our algorithm with an $O(n^3)$ Stoer-Wagner~\cite{Stoer97} and $O(n^3 \log n)$ Karger~\cite{Karger93a}.}
	\label{graph}
\end{figure}
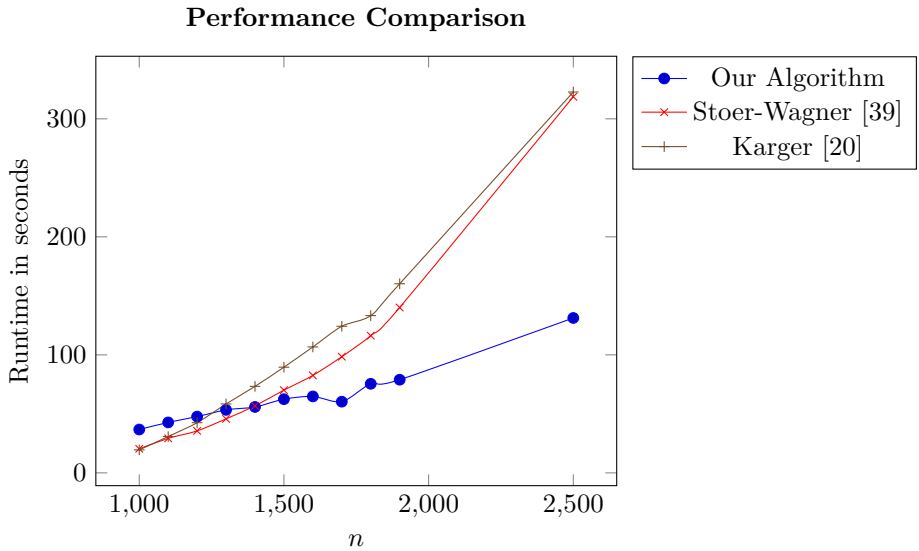

Figure~\ref{graph} demonstrates the near-linear growth in the running time of our algorithm. Unfortunately, it does not appear our implementation is competitive compared to existing implementations~\cite{Chekuri97}. The bottleneck is in obtaining the $O(\log n)$ spanning trees for Algorithm~\ref{2respect}, even when Algorithm~\ref{packing} runs in $O(m \log^3 n)$ time and Algorithm~\ref{2respect} runs in $O(m \log^4 n)$ time. The issue is the large constant factors due to the quadratic dependencies on epsilons, seen in Algorithms~\ref{packpart2} and \ref{packing2}. We have calculated that the number of calls to the minimum spanning tree routine in our implementation can be as much as $8100 \ln n \ln m$, and that changing the choices of epsilons for Algorithms~\ref{packpart} and \ref{packing} does not yield significant improvement.

If we replace Algorithm~\ref{packpart} with the more-complicated Gabow's algorithm~\cite{Gabow91}, we can likely improve our implementation's runtime. Further, a factor of about two can be saved by finding $c'$ via an approximation algorithm~\cite{Karger95b}. However, a large constant factor will remain due to the sampling procedure in Lemma~\ref{sample}, discussed in Appendix~\ref{kargerdetails}. All known algorithms to compute weighted tree packings have dependence on $c$, the value of the minimum cut, and Lemma~\ref{sample} reduces the value of the minimum cut to at least $3(d+2)(\ln n)/\epsilon^2$, which in our algorithms manifests as a factor of $108(d+2) \ln n$. It appears that for Karger's approach to be made practical, this large constant factor will likely need to be improved or heuristic approaches would need to be considered~\cite{Chekuri97}.

\section{Conclusion}
\label{conclude}

In this paper, we have discussed a simplification to Karger's original near-linear time minimum cut algorithm~\cite{Karger00}. In contrast to Karger's original algorithm~\cite{Karger00}, finding spanning trees that have a constant probability of $2$-respecting the minimum cut is now the more-complicated part of the algorithm and finding minimum cuts that $2$-respect a tree is relatively simpler. In actuality, both were complicated in Karger's original algorithm, however the work to find the tree packing was largely abstracted to previous publications. The same can be said for many statements of Karger's near-linear time algorithm~\cite{Gawrychowski19,Mukhopadhyay19}. Our version, on the other hand, is self-contained: the only procedures outside of Algorithms~\ref{packpart}, \ref{packing}, and \ref{2respect} required to implement the full algorithm are a minimum spanning tree subroutine and (optionally) a top tree data structure.

The main contribution of our algorithm is a new, simple procedure to find a minimum cut that $2$-respects a tree $T$ in $O(m \log^2 n)$ time. Karger advertises that the complexity of his near-linear time algorithm is $O(m \log^3 n)$ and thus his routine to find a minimum cut that $2$-respects a tree also takes $O(m \log^2 n)$ time. However, he gives two small improvements to the algorithm to reduce the overall runtime to $O(m \log^2 n \log(n^2/m)/\log \log n + n \log^6 n)$. The first uses the fact that finding a $1$-respecting cut can be done in linear time, and the other is an improvement which reduces an $O(\log n)$ factor to an $O(\log (n^2/m))$ factor in the $2$-respect routine. For our algorithm, the first improvement can be applied by substituting our $1$-respect algorithm with his. The second improvement can not be applied. Thus, when $m = \Theta(n^2)$, his algorithm is faster by an $O(\log n)$ factor. However, for this case, Karger gives a different, simpler algorithm~\cite{Karger00} which finds the global minimum cut in $O(n^2 \log n)$ time anyway.

There are three algorithms that are referred to as simple min-cut algorithms: the Stoer-Wagner algorithm~\cite{Stoer97} which runs in $O(nm \log n)$ time or $O(nm + n^2 \log n)$ time with a Fibonacci heap~\cite{Fredman87}, Karger's randomized contraction algorithm~\cite{Karger93a} which runs in $O(n^2m \log n)$ time, and the improvement to Karger's algorithm by Karger and Stein~\cite{Karger96} which runs in $O(n^2 \log^3 n)$ time. In comparison to these, our approach is the least simple. However, our $O(m \log^3 n)$ runtime is significantly better. While the large constant factors in our approach make this only relevant at large values of $n$, we hope the procedure developed in this paper can be used in conjunction with an optimized version of Karger's sampling technique to produce an asymptotically fast, practical minimum cut algorithm.

\bibliography{KargerSimple}

\appendix

\section{Karger's Algorithm for Packing Spanning Trees}
\label{kargerdetails}

In this section we give the intuition and mathematics behind the spanning tree packing of Karger's algorithm.

\subsection{Tree Packing}

The basic idea of Karger's near-linear time algorithm~\cite{Karger00} is to exploit the following combinatorial result. Recall that a tree packing of an undirected unweighted graph $G$ is a set of spanning trees such that each edge of $G$ is contained in at most one spanning tree. The weight of a tree packing is the number of trees in it.

\begin{theorem}[Nash-Williams~\cite{NashWilliams61}]
\label{nashw}
Any undirected unweighted multigraph with minimum cut $c$ contains a tree packing of weight at least $c/2$.
\end{theorem}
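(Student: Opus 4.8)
The plan is to obtain this as a corollary of the classical Tutte--Nash-Williams tree-packing theorem: an undirected multigraph $G=(V,E)$ contains $k$ pairwise edge-disjoint spanning trees if and only if, for every partition $\mathcal{P}=\{V_1,\dots,V_r\}$ of $V$ into $r\ge 1$ nonempty parts, the number $e(\mathcal{P})$ of edges of $G$ joining two distinct parts satisfies $e(\mathcal{P})\ge k(r-1)$. I would cite Nash-Williams~\cite{NashWilliams61} for this min-max statement rather than reprove it (it is itself the matroid-union theorem applied to $k$ disjoint copies of the graphic matroid of $G$, proved by a standard exchange argument). Granting it, the whole task reduces to verifying the partition inequality with $k=\lfloor c/2\rfloor$.

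So the one real step is the counting bound. Fix a partition $\mathcal{P}=\{V_1,\dots,V_r\}$; we may assume $r\ge 2$, since for $r=1$ the requirement $k(r-1)=0$ is vacuous. For each $i$, the set $\partial(V_i)$ of edges with exactly one endpoint in $V_i$ is the edge set of a genuine cut of $G$ (because $V_i\ne\emptyset$ and $V\setminus V_i\ne\emptyset$), so $|\partial(V_i)|\ge c$. Each edge counted by $e(\mathcal{P})$ has its two endpoints in exactly two distinct parts, hence lies in exactly two of the sets $\partial(V_i)$; summing over $i$ therefore double-counts these edges, giving $\sum_{i=1}^{r}|\partial(V_i)| = 2\,e(\mathcal{P})$. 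Combining,
\[
2\,e(\mathcal{P}) \;=\; \sum_{i=1}^{r}|\partial(V_i)| \;\ge\; r\,c,
\]
so $e(\mathcal{P})\ge rc/2 \ge (r-1)\,c/2 \ge (r-1)\lfloor c/2\rfloor$. Thus the Tutte--Nash-Williams hypothesis holds with $k=\lfloor c/2\rfloor$, and $G$ has $\lfloor c/2\rfloor$ pairwise edge-disjoint spanning trees — i.e., a tree packing (each tree of weight $1$, each edge used by at most one tree) of weight $\lfloor c/2\rfloor$, which is at least $c/2$ for even $c$ and at least $(c-1)/2$ in general; the bound is used downstream only up to constants, and we state it as $c/2$ following Karger.

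The only genuine obstacle is the Tutte--Nash-Williams packing theorem itself; once that min-max characterization is in hand, the argument is just the short edge-counting estimate above. An alternative that sidesteps even citing it would be to construct the packing greedily/fractionally — essentially the analysis underlying Algorithm~\ref{packpart} and Lemma~\ref{packpartlemma} — but invoking the combinatorial min-max theorem is by far the cleanest way to reach the clean bound $c/2$, so that is the route I would take here.
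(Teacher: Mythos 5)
The paper does not prove this theorem; it cites it directly to Nash-Williams~\cite{NashWilliams61} and uses it as an external black box (the only nearby argument is the proof of Lemma~\ref{weighted-nashw}, which \emph{invokes} Theorem~\ref{nashw} rather than proving it). Your derivation, reducing the $c/2$ bound to the Tutte--Nash-Williams partition min-max characterization via a double-counting estimate, is the standard textbook route and is correct as far as it goes: each $\partial(V_i)$ is a nonempty cut so has size at least $c$, cross-edges are counted twice, hence $2e(\mathcal{P})=\sum_i|\partial(V_i)|\ge rc$, giving $e(\mathcal{P})\ge (r-1)\lfloor c/2\rfloor$ and thus $\lfloor c/2\rfloor$ disjoint spanning trees. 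This is a genuine proof where the paper offers none, so it is a legitimate addition rather than a different route to the same proof.

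One point worth sharpening, which you flag but somewhat soft-pedal: with the paper's integer notion of an unweighted tree packing (``weight'' $=$ number of edge-disjoint spanning trees), the literal statement ``weight at least $c/2$'' is false when $c$ is odd. The Petersen graph has $c=3$ but only $15$ edges on $10$ vertices, so it cannot contain two edge-disjoint spanning trees, and its max packing weight is $1<3/2$. Your argument correctly gives only $\lfloor c/2\rfloor$, which is the honest bound. The statement becomes literally true if ``tree packing'' is read fractionally (as in Definition 2 applied with all edge weights $1$): the bound $e(\mathcal{P})\ge \frac{r}{r-1}\cdot\frac{c}{2}$ shows the graph strength, hence the max fractional packing weight, is at least $c/2$. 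Since every downstream use in the paper (Lemma~\ref{weighted-nashw}, Lemma~\ref{cuts}) either already passes to fractional packings or uses the bound only up to constants, the discrepancy is harmless, but it is cleaner to either prove $\lfloor c/2\rfloor$ for the integer version, or note explicitly that $c/2$ holds in the fractional sense. With that caveat recorded, your proposal is sound.
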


Now consider a minimum cut and a tree packing given by Theorem~\ref{nashw}. Each edge of the minimum cut can only be present in at most one spanning tree. As there are $c$ edges of the minimum cut, this implies that the average spanning tree contains at most $c/(c/2) = 2$ edges of the minimum cut. In other words, a spanning tree chosen at random from a packing of Theorem~\ref{nashw} will $2$-constrain the minimum cut with probability at least $1/2$.

Suppose we are given a spanning tree $T$ of $G$ with each edge of $T$ marked if it crosses the minimum cut. The endpoints of any marked edge must fall on opposite sides of the cut. Conversely, the endpoints of any unmarked edge must be on the same side of the cut. It follows that if we know the edges of $T$ that cross the minimum cut, we can determine the vertex partition of the minimum cut and its total weight in $G$.

This gives the intuition behind Karger's algorithm~\cite{Karger00}. We sample spanning trees from a tree packing of $G$ and for each tree $T$, we find the minimum cut that $2$-respects $T$. Unfortunately, several obstacles need be overcome before this can be made into an efficient algorithm. For one, all currently known approaches of determining a tree packing of Theorem~\ref{nashw} have runtime $\Omega(cm)$, which for large values of $c$ is far more than the runtime we seek. Further, Theorem~\ref{nashw} must be generalized to weighted graphs.

We first address the latter concern. Recall the definition of weighted tree packings given in Section~\ref{karger}.

\begin{lemma}[Karger~\cite{Karger00}]
\label{weighted-nashw}
Any undirected weighted graph with minimum cut $c$ contains a weighted tree packing of weight at least $c/2$.
\end{lemma}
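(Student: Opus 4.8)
The plan is to reduce the weighted statement to the unweighted multigraph statement of Theorem~\ref{nashw} by scaling, and then to pass from rational to arbitrary real weights by a compactness argument. First I would handle the case where every edge weight is a positive rational. Let $M$ be a common denominator of all edge weights, so $M\cdot w(e)$ is a positive integer for every edge $e$. Form the unweighted multigraph $G_M$ by replacing each edge $e$ of $G$ with $M\cdot w(e)$ parallel copies. For any vertex subset $S$, the number of edges of $G_M$ crossing $S$ is exactly $M$ times the weight of the corresponding cut of $G$, so the minimum cut of $G_M$ equals $Mc$. By Theorem~\ref{nashw}, $G_M$ contains an unweighted tree packing $\mathcal{T}$ of weight at least $Mc/2$: a set of at least $Mc/2$ spanning trees of $G_M$ in which each copied edge lies in at most one tree. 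Each spanning tree of $G_M$ projects, by forgetting which parallel copy it uses, to a spanning tree of $G$; I would assign weight $1/M$ to each projected tree. A tree of the packing can pass through a fixed edge $e$ of $G$ only via one of its $M\cdot w(e)$ parallel copies, and each copy lies in at most one tree, so at most $M\cdot w(e)$ trees of the packing pass through $e$; their total assigned weight is therefore at most $w(e)$, and the collection is a valid weighted tree packing of $G$ of weight $|\mathcal{T}|/M \ge c/2$.

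For the general, possibly irrational, case I would use a limiting argument that exploits the fact that $G$ has only finitely many spanning trees. Pick rational weights $w_k(e)$ with $0 < w_k(e) \le w(e)$ and $w_k(e) \to w(e)$ for every edge. Since the minimum cut is the minimum over the finitely many vertex subsets $S$ of the linear function $\sum_{e \in \delta(S)} w_k(e)$, the minimum cut $c_k$ of the graph $G_k$ with weights $w_k$ satisfies $c_k \le c$ and $c_k \to c$. By the rational case $G_k$ has a weighted tree packing of weight at least $c_k/2$, and because $w_k(e) \le w(e)$ any packing feasible for $G_k$ is also feasible for $G$. The set of weighted tree packings of $G$ is a compact subset of $\mathbf{R}^{N}$, where $N$ is the finite number of spanning trees of $G$: it is cut out by finitely many closed halfspaces and is bounded, since each tree's weight is at most $\max_e w(e)$. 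Hence the packing weight attains a maximum over this set, and that maximum is at least $c_k/2$ for every $k$, so it is at least $\sup_k c_k/2 = c/2$.

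The routine part is the scaling bookkeeping in the rational case; the only point needing care is the passage to irrational weights, which is why I single out the compactness step as the crux. An alternative that avoids limits is to invoke the weighted form of the Nash--Williams--Tutte theorem directly: the maximum fractional tree packing equals $\min_{\mathcal{P}} w(\delta(\mathcal{P}))/(|\mathcal{P}|-1)$ over partitions $\mathcal{P}$ of $V$, and since each part has cut value at least $c$ while every edge is counted at most twice, $w(\delta(\mathcal{P})) \ge |\mathcal{P}|\,c/2$, making the ratio at least $c/2$. However, this only shifts the work into proving that theorem, whereas the reduction above uses nothing beyond the already-cited Theorem~\ref{nashw}.
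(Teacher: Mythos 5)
Your proof follows the same core route as the paper: scale rational weights to integers, view the result as an unweighted multigraph, apply Theorem~\ref{nashw}, project each tree back and assign it weight equal to the reciprocal of the scaling factor, and check feasibility edge by edge. Where you genuinely differ is in how the passage from rational to arbitrary real weights is closed. The paper frames this as a contradiction: it assumes no packing of weight $(1-\epsilon)c/2$ exists, approximates each $w_i$ by a rational $a_i/b_i < w_i$ with $w_i - a_i/b_i < \epsilon$, scales, and invokes Theorem~\ref{nashw} to produce a packing of weight at least $(1-\epsilon)c/2$. As written this only shows that for every $\epsilon > 0$ there is a packing of weight at least $(1-\epsilon)c/2$; it never takes the limit, so the stated conclusion (weight at least $c/2$ exactly) is not reached. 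There is also a smaller glitch: an additive per-edge error of $\epsilon$ degrades the minimum cut additively by up to $m\epsilon$, not multiplicatively by $(1-\epsilon)$, so the paper's $\epsilon$ would need to depend on $m$ and $c$. Your version fixes both points. By approximating from below with $w_k(e) \to w(e)$ you get a sequence $c_k \to c$ with $c_k \le c$, the rational case gives a packing of weight $c_k/2$ that is feasible for $G$ because $w_k \le w$, and the compactness of the packing polytope over the finitely many spanning trees lets you pass to the limit and attain weight $c/2$. In short, same skeleton, but you supply the limiting argument the paper leaves implicit, so yours is the rigorous form of the paper's sketch.
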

\begin{proof}
For contradiction, suppose some graph $G$ with minimum cut $c$ and $\epsilon > 0$ exist such that $G$ does not contain a weighted packing of weight $(1-\epsilon)c/2$ or greater.

Take $G$ and approximate each edge $e_i$ of weight $w_i$ by a rational number $a_i/b_i$ such that $a_i/b_i < w_i$ and $w_i - a_i/b_i < \epsilon$. Multiply all edges by $d = \prod_i b_i$ and call the resulting graph $G'$. Then by Theorem~\ref{nashw}, when viewed as an unweighted multigraph, $G'$ has a tree packing of weight at least $(1-\epsilon)dc/2$. If we weight each tree of the packing by $1/d$, the packing becomes a weighted packing of $G$ of weight at least $(1-\epsilon)c/2$, a contradiction.
\end{proof}

Note that for both Lemma~\ref{nashw} and Lemma~\ref{weighted-nashw}, an upper bound of weight $c$ also exists, because every spanning tree in the packing must cross the minimum cut at least once.

To effectively use Lemma~\ref{weighted-nashw}, we formally state the relationship between weighted packings and trees that $2$-constrain small cuts.

\begin{lemma}[Karger~\cite{Karger00}]
\label{cuts}
Consider a weighted graph $G$ and a weighted tree packing of weight $\beta c$, where $c$ is the weight of the minimum cut in $G$. Then given a cut of weight $\alpha c$, a fraction at least $\frac{1}{2}(3 - \alpha/\beta)$ of the trees (by weight) $2$-constrain the cut.
\end{lemma}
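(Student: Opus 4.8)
The plan is to prove the lemma by a weighted double-counting over the (tree, crossing-edge) incidences of the packing. Fix the cut of weight $\alpha c$, let $\mathcal{P}$ be the given weighted packing of total weight $\beta c$, and for each tree $T \in \mathcal{P}$ of weight $w_T$ let $k_T$ be the number of edges of $T$ crossing the cut. Since $T$ is spanning and the cut is non-trivial, every tree contributes $k_T \ge 1$. The first step is to establish the inequality
\[
  \sum_{T \in \mathcal{P}} w_T\, k_T \;=\; \sum_{e \text{ crossing}} \Bigl(\sum_{T \ni e} w_T\Bigr) \;\le\; \sum_{e \text{ crossing}} w(e) \;=\; \alpha c,
\]
where the middle inequality is exactly the defining property of a weighted tree packing (the total weight of trees through an edge is at most the weight of that edge).

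Next I would partition the packing by how many cut edges each tree contains: let $w_1$, $w_2$, and $w_{\ge 3}$ be the total weights of trees with $k_T = 1$, $k_T = 2$, and $k_T \ge 3$ respectively, so $w_1 + w_2 + w_{\ge 3} = \beta c$. The trees that $2$-constrain the cut are precisely those counted in $w_1 + w_2$. Bounding $\sum_T w_T k_T$ from below term-by-term gives $w_1 + 2w_2 + 3w_{\ge 3} \le \alpha c$. Eliminating $w_{\ge 3} = \beta c - w_1 - w_2$ and rearranging yields $3\beta c - \alpha c \le 2w_1 + w_2$, and since $w_2 \ge 0$ the right-hand side is at most $2(w_1 + w_2)$. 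Hence $w_1 + w_2 \ge \tfrac{1}{2}(3\beta - \alpha)c = \tfrac{1}{2}(3 - \alpha/\beta)\,\beta c$, and dividing by the total packing weight $\beta c$ gives the claimed fraction.

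Every step here is elementary, so there is no real obstacle; the one place to be careful is the displayed inequality, where I must be precise that it is the packing constraint (and not merely the structure of spanning trees) that converts a statement about tree edges into the bound $\alpha c$ on the weighted incidence count, and that $k_T \ge 1$ for all $T$ so that no tree is "free." The remaining algebra is routine, and the slack in $2w_1 + w_2 \le 2(w_1 + w_2)$ shows the estimate is tight exactly when no tree in the packing $1$-respects the cut.
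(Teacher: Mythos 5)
Your proof is correct and follows essentially the same double-counting argument as the paper: bound the weighted number of (tree, crossing-edge) incidences above by $\alpha c$ via the packing constraint, below by grouping trees according to how many cut edges they contain, and solve. The only cosmetic difference is that you keep three groups ($k_T = 1$, $k_T = 2$, $k_T \ge 3$) where the paper merges the first two into a single group $y$ with lower bound $1 \cdot y$, giving the identical inequality after your relaxation $2w_1 + w_2 \le 2(w_1 + w_2)$.
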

\begin{proof}
Note that every spanning tree must cross every cut.
Let $x$ denote the total weight of trees with at least three edges crossing the cut and $y$ the total weight of trees with one or two edges crossing the cut.
Then $x+y=\beta c$ and $3x+y\le\alpha c$. Rearranging, we get $y\ge\frac12(3\beta c-\alpha c)$.
\end{proof}

\subsection{Random Sampling}

In order to avoid the $\Omega(cm)$ complexity of finding a packing of weight $c/2$, we first apply random sampling to $G$. Specifically, we use the following from Karger's earlier work.

\begin{lemma}[Karger \cite{Karger99}]
\label{sample}
Let $p = 3(d+2)(\ln n)/(\epsilon^2 \gamma c) \leq 1$, where $c$ is the weight of the minimum cut of an unweighted multigraph $G$ and $\gamma \leq 1, \gamma = \Theta(1)$. Then if we sample each edge of $G$ independently with probability $p$, the resulting graph $H$ has the following properties with probability $1-1/n^d$.
\begin{enumerate}
  \item The minimum cut in $H$ is of size within a $(1+\epsilon)$ factor of $cp = 3(d+2)(\ln n)/(\gamma \epsilon^2)$, which is $O(\epsilon^{-2} \log n)$.
  \item A cut in $G$ takes value within a factor $(1+\epsilon)$ of its expected value in $H$. In particular, the minimum cut in $G$ corresponds (under the same vertex partition) to a $(1+\epsilon)$-times minimum cut of $H$.
\end{enumerate}
\end{lemma}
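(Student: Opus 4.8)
The plan is to combine a Chernoff bound on the value of a single cut with Karger's cut-counting bound and a union bound over all cuts. First I would fix any cut of $G$ of value $v$; since $G$ is an unweighted multigraph and every edge crossing the cut is retained in $H$ independently with probability $p$, its value $X$ in $H$ is a sum of $v$ independent $0/1$ random variables with mean $\mu = pv$. A standard Chernoff bound gives $\Pr{|X-\mu| \ge \epsilon\mu} \le 2\exp(-\epsilon^2\mu/3)$ for $\epsilon \le 1$. Writing $v = \alpha c$ with $\alpha \ge 1$ and substituting $p = 3(d+2)(\ln n)/(\epsilon^2\gamma c)$, the exponent becomes $\epsilon^2 p v/3 = (d+2)\alpha(\ln n)/\gamma \ge (d+2)\alpha\ln n$, using $\gamma \le 1$. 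Hence this single cut deviates by more than a $(1\pm\epsilon)$ factor from $pv$ with probability at most $2n^{-(d+2)\alpha}$.

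Next I would take a union bound over all cuts of $G$, grouped by value. Here I would invoke Karger's cut-counting bound \cite{Karger93a}: the number of cuts of $G$ of value at most $\alpha c$ is at most $n^{2\alpha}$ (up to a constant for non-integral $\alpha$). Partitioning the cuts into bands indexed by (say) $\lceil \alpha \rceil$, the band with parameter $\alpha$ contains at most $n^{2\alpha}$ cuts, each bad with probability at most $2n^{-(d+2)\alpha}$, so its contribution to the failure probability is at most $2n^{2\alpha}\,n^{-(d+2)\alpha} = 2n^{-d\alpha}$. Summing this geometric-type series over $\alpha \ge 1$ bounds the total failure probability by $O(n^{-d})$; tracking the constants carefully — which is exactly what the specific constant in the definition of $p$ is calibrated for — this is at most $1/n^d$. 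On the complementary event, every cut of $G$ has value in $H$ within a $(1+\epsilon)$ factor of $p$ times its value in $G$, which is precisely statement 2 (the last sentence of statement 2 being the special case of the minimum cut of $G$).

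Statement 1 then follows from statement 2 on the same event. Applying the bound to the minimum cut of $G$, its image in $H$ has value at most $(1+\epsilon)pc$, so the minimum cut of $H$ is at most $(1+\epsilon)pc$. Conversely, every cut of $G$ has value at least $c$, hence value at least $(1-\epsilon)pc$ in $H$, so the minimum cut of $H$ is at least $(1-\epsilon)pc$. Finally $pc = 3(d+2)(\ln n)/(\gamma\epsilon^2) = O(\epsilon^{-2}\log n)$, giving the stated form.

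I expect the main obstacle to be the union-bound/cut-counting step: one must be careful that Karger's count $n^{2\alpha}$ refers to cuts of value at most $\alpha c$ (not exactly $\alpha c$), so the banding must be arranged so that the $n^{2\alpha}$ growth is strictly dominated by the $n^{-(d+2)\alpha}$ decay and the resulting tail sum has enough slack to reach $1/n^d$ rather than merely $O(n^{-d})$. The single-cut Chernoff estimate and the deduction of statement 1 from statement 2 are routine by comparison.
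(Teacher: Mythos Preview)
The paper does not actually prove this lemma; it is quoted verbatim from Karger~\cite{Karger99} and used as a black box. Your proposal is the standard proof from that reference: a Chernoff bound on each individual cut, combined with Karger's bound of $n^{2\alpha}$ on the number of $\alpha$-minimum cuts, and a union bound summed over dyadic bands in $\alpha$. This is correct, and your identification of the only delicate point --- arranging the banding so that the $n^{2\alpha}$ count is beaten by the $n^{-(d+2)\alpha}$ tail with enough slack to sum --- is exactly right.
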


By picking $\epsilon$ to be a constant such as $1/6$, Lemma~\ref{sample} will allow us to reduce the size of the minimum cut in $H$ to $O(\log n)$. We can then run existing algorithms~\cite{Plotkin92,Gabow91} to pack trees in $H$ in $\tilde{O}(m)$ time. Further, since the minimum cut of $G$ corresponds to a $(1+\epsilon)$-times minimum cut of $H$, we can still apply Lemma~\ref{cuts} on the sampled graph $H$ so that a tree randomly sampled from the packing has a constant probability of $2$-constraining the minimum cut in $G$.

There are still several issues to resolve. Lemma~\ref{sample} applies to unweighted multigraphs $G$, but our graph $G$ can have non-negative real weights. The other issue is that the value $\gamma$ needs to be known ahead of time in order to apply the lemma. We first address the latter issue.

Lemma~\ref{sample} requires knowing a constant-factor underestimate $c' = \gamma c$ for the minimum cut $c$. In particular, without $\gamma \leq 1$, property 2 of Lemma~\ref{sample} is not guaranteed with high probability, and if $\gamma = o(1)$, the minimum cut of $H$ will be of size $\omega(\epsilon^{-2} \log n)$ with high probability. We may run a linear-time $3$-approximation algorithm~\cite{Matula93}, with modifications to work on weighted graphs~\cite{Karger95b}, to find this approximation. This is simple to state, but more difficult to implement.

A different approach is to start with a known upper bound $U$ for $c'$. Karger states that we can then halve this upper bound until ``our algorithms succeed''~\cite{Karger99}. This approach is taken by the implementation of Chekuri et al.~\cite{Chekuri97}. Unfortunately, it is not rigorous as stated. Lemma~\ref{sample} indicates that with a constant-factor underestimate $c' = \gamma c$ for $c$, our algorithm can proceed. However, it does not give a process for rejecting a guess $c'$ that is not a constant-factor underestimate for $c$. We could try all powers of $2$ for $c'$ within a known lower and upper bound of the value of the minimum cut, and run our algorithms for all possibilities. This is rigorous, but introduces an extra $O(\log n)$ factor in our runtimes, assuming the range of $c'$ we try is polynomial in $n$. We instead show the following.

\begin{lemma}
\label{overestimate}
Let $p = 3(d+2)(\ln n)/(\epsilon^2 \gamma c) \leq 1$ as in Lemma~\ref{sample}, but with $\gamma \geq 6$ and $\epsilon \leq 1/3$. Then if we sample each edge of the unweighted multigraph $G$ uniformly at random with probability $p$, the resulting graph $H$ has minimum cut of size less than $(d+2)(\ln n)/\epsilon^2$ with probability at least $1-1/n^{d+2}$.
\end{lemma}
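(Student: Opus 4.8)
The plan is to bound the minimum cut of $H$ from above by the weight, in $H$, of the single cut induced by the vertex partition that achieves the minimum cut $c$ of $G$, and then argue that this weight is very unlikely to be as large as $(d+2)(\ln n)/\epsilon^2$. Since every one of the $c$ edges crossing that fixed partition is kept in $H$ independently with probability $p$, its weight in $H$ is a sum $X$ of $c$ independent Bernoulli$(p)$ variables, so $\mu := \Ex{X} = cp = 3(d+2)(\ln n)/(\epsilon^2\gamma)$ after substituting the definition of $p$. Because the minimum cut of $H$ is at most $X$, it suffices to show $\Pr{X \ge M} \le 1/n^{d+2}$, where I abbreviate $M := (d+2)(\ln n)/\epsilon^2$.

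Next I would set $t := \gamma/3$, so that $M = t\mu$ and, by the hypothesis $\gamma \ge 6$, we have $t \ge 2$. Applying the multiplicative Chernoff upper-tail bound $\Pr{X \ge (1+\delta)\mu} \le \left(e^{\delta}/(1+\delta)^{1+\delta}\right)^{\mu}$ with $1+\delta = t$, then substituting $\mu = M/t$ and simplifying the logarithm of the right-hand side, the exponent collapses to $-M(\ln t + 1/t - 1)$; that is, $\Pr{X \ge M} \le \exp(-M(\ln t + 1/t - 1))$.

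It remains to check the constants. The function $g(t) = \ln t + 1/t - 1$ vanishes at $t=1$ and is increasing for $t \ge 1$, so for $t \ge 2$ we have $g(t) \ge \ln 2 - 1/2 > 1/9$. The hypothesis $\epsilon \le 1/3$ gives $M = (d+2)(\ln n)/\epsilon^2 \ge 9(d+2)\ln n$, so $M\, g(t) \ge (d+2)\ln n$ and therefore $\Pr{X \ge M} \le e^{-(d+2)\ln n} = 1/n^{d+2}$. On the complementary event (probability at least $1 - 1/n^{d+2}$) the minimum cut of $H$ is at most $X < M$, which is exactly the claim.

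The only point that needs care is to keep the dependence on $\gamma$ intact when invoking the Chernoff bound: one should work with the ratio $t = M/\mu = \gamma/3$ rather than merely the crude consequence $\mu \le M/2$ of $\gamma \ge 6$. When $\gamma$ is large, $\mu$ is small, so any bound of the shape $\exp(-\Theta(\mu))$ would be far too weak; what is needed is the threshold-relative form $\exp(-\Theta(M))$, which is why the argument must retain the $-\delta\mu$ term in the Chernoff exponent (equivalently, why the looser estimate $\Pr{X \ge M} \le (e\mu/M)^M$, which is already vacuous at $\gamma = 6$, does not suffice).
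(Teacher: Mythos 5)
Your proof is correct and follows essentially the same route as the paper: both bound the weight in $H$ of the fixed vertex partition realizing the minimum cut of $G$ (which dominates the minimum cut of $H$) by a multiplicative Chernoff upper-tail bound, and verify that the exponent exceeds $(d+2)\ln n$ using $\gamma \ge 6$ and $\epsilon \le 1/3$. The only cosmetic difference is that the paper uses the linearized form $\Pr{X \ge (1+\delta)\mu} \le e^{-\mu\delta/3}$ for $\delta \ge 1$, whereas you invoke the sharper moment-generating-function form and do a bit more calculus to reach the same conclusion.
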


\begin{proof}
Consider the size of a minimum cut of $G$ as a cut in $H$. Let $X$ be a random variable denoting this size. Then $\Ex{X} = cp$. By a Chernoff bound, $\Pr{X \geq (1+\delta) cp} \leq e^{-\frac{1}{3}(cp\delta)}$ for $\delta \geq 1$. Let $(1+\delta) = \frac{\gamma}{3}$. Then
\begin{align*}
  \Pr{X \geq (d+2)(\ln n)/\epsilon^2} &\leq e^{-\frac{1}{3}(cp(\frac{\gamma}{3}-1))}\\
  &= e^{-(d+2)(\ln n)\gamma^{-1}\epsilon^{-2}(\frac{\gamma}{3}-1))}\\
  &= n^{-\frac{1}{3}(d+2)\epsilon^{-2} + (d+2)\gamma^{-1}\epsilon^{-2}}\\
  &\leq n^{-\frac{1}{6}(d+2)\epsilon^{-2}}\\
  &< n^{-(d+2)}.
\end{align*}
Therefore, the minimum cut in $H$ has size less than $(d+2)(\ln n)/\epsilon^{-2}$ with probability at least $1-1/n^{d+2}$.
\end{proof}

Lemma~\ref{overestimate} states that if our estimate $c' = \gamma c$ satisfies $\gamma \geq 6$, the minimum cut will be at least a factor $3$ smaller than $3(d+2)(\ln n)/\epsilon^2$ with high probability. Recall that with $\gamma = 1$ and therefore $c' = c$, we expect the minimum cut in $H$ to be within a factor $(1+\epsilon)$ from $3(d+2)(\ln n)/\epsilon^2$ with high probability. Lemma~\ref{overestimate} gives us the necessary tool to reject $c'$ that are not a constant factor underestimate of $c$. We try a value for $c'$, and if the size of the minimum cut in $H$ is greater than $(1+\epsilon)^{-1}3(d+2)(\ln n)/\epsilon^2$, we know $c' < 6c$. Therefore we can decrease $c'$ by a factor of $6$ and rerun the tree packing algorithm. The resulting graph $H$ must satisfy the conditions of Lemma~\ref{sample}, therefore the algorithm may proceed. Since our tree packing algorithms determine the minimum cut up to constant factors, this approach avoids the need of a different (or recursive!) minimum cut algorithm to run on $H$.

We briefly remark on the choice of known upper bound $U$. If the edge weights are polynomially bounded by the number of vertices, $n$, a simple upper bound of the sum of weights of edges attached to any single vertex will do. If we do not consider this guarantee, Karger shows~\cite{Karger99} that the minimum weight edge $w$ in a maximum spanning tree has the property that the minimum cut must have weight between $w$ and $n^2w$. Thus, setting $U=n^2w$ gives only $O(\log n)$ values of $c'$ to try regardless of edge weights. The choice of an upper bound $U$ is further discussed in~\cite{Chekuri97}.

We now return to the issue of real-value weights in Lemma~\ref{sample}. This was described as a complication in~\cite{Chekuri97}, to which they substituted a heuristic method in order to achieve practicality. The approach we have described thus far is amenable to small constant-factor approximations. Suppose we replace $G$ with a graph $G'$ such that each edge weight is first normalized so the smallest weight edge has weight $1$, then all edge weights are multiplied by $100$ and rounded to the nearest integer. Normalizing has no effect on the relative sizes of cuts in $G'$. Rounding to the nearest integer when the smallest weight edge has weight at least $100$ has the effect that a cut of weight $x$ will take on a new weight in range $[.995x, 1.005x]$. Then the original minimum cut of $G$ corresponds to an at most $201/199$-times minimum cut of $G'$. Now, $G'$ can be represented as an unweighted multigraph and then sampled according to Lemma~\ref{sample}. In the resulting graph $H$, the minimum cut of $G$ corresponds to an at most $201/199 \cdot 7/6$-times minimum cut of $H$ with the choice $\epsilon = 1/6$. By adjusting constants throughout the rest of our approach, this shows we can treat real weighted graphs $G$ correctly. The other issue is how to do so efficiently.

If we consider $G'$ as an unweighted multigraph, the number of edges of $G'$ is proportional to the weight of edges of $G$, which may be quite large. However, we may also consider $G'$ as an integer-weighted graph, in which case we can sample each edge of $G'$ by drawing from the binomial distribution with probability $p$ and number of trials the weight of the respective edge. There are many methods to sample from the binomial distribution. One simple method that can be made efficient for our purposes is inverse transform sampling. Let $X$ denote a random variable sampled from the binomial distribution as described. In inverse transform sampling, we draw a number $u$ uniformly at random between $0$ and $1$, and then choose our sample $x$ to be the largest such that $P(X < x) \leq u$. Instead of having to sample a number of times equal to the weight of an edge, we must only compute the probabilities of the cumulative distribution function for the binomial distribution for all possible values that may result in $H$. We can make this efficient with the following observation. Say the weight of the minimum cut in $H$ is $\hat{c}$. Then a tree packing of $H$ has value at most $\hat{c}$, and in particular for a given edge, any weight beyond $\hat{c}$ is excess capacity that cannot be used in the tree packing. It follows that capping the weight of any edge of $H$ to the maximum size of the minimum cut in $H$, thus $O(\log n)$, will have no impact on the packing found. Thus, we must only compute $O(\log n)$ probabilities of the binomial distribution per edge, which can be done in total $O(\log n)$ time per edge.

The final choice is to pick a tree packing algorithm. Karger gives two options. The first is an algorithm by Gabow~\cite{Gabow91}, which computes a $c/2$ packing. The second is a more general approach by Plotkin-Shmoys-Tardos~\cite{Plotkin92}, which can find a packing a factor $(1+\epsilon')$ from the maximum packing, which has value in $[c/2, c]$. Karger describes the latter approach as simpler, using only minimum spanning tree computations. Although the paper~\cite{Plotkin92} does not explicitly give a routine for packing spanning trees, such a procedure is explicitly given in Thorup and Karger~\cite{Thorup00}, with credit given to Plotkin-Shmoys-Tardos~\cite{Plotkin92} and Young~\cite{Young95}. This procedure also appears in Gawrychowski et al.~\cite{Gawrychowski19}. We give the procedure in Algorithm~\ref{packpart} and state a version of Algorithm~\ref{packpart} with general epsilon in Algorithm~\ref{packpart2}.

\begin{algorithm}[H]
\caption{Obtain a Packing of Weight at least $(1-\epsilon)c/2$ from a Graph $G$}
\label{packpart2}
\begin{flushleft}
Let $G$ be a graph with $m$ edges and $n$ vertices.
\begin{enumerate}
  \item Initialize $\ell(e) \leftarrow 0$ for all edges $e$ of $G$. Initialize multiset $P \leftarrow \emptyset$. Initialize $W \leftarrow 0$.
  \item Repeat the following:
  \begin{enumerate}
    \item Find a minimum spanning tree $T$ with respect to $\ell(\cdot)$.
    \item Set $\ell(e) \leftarrow \ell(e) + \epsilon^2/(3 \ln m)$ for all $e \in T$. If $\ell(e) > 1$, return $W, P$.
    \item Set $W \leftarrow W + \epsilon^2/(3 \ln m)$.
    \item Add $T$ to $P$.
  \end{enumerate}
\end{enumerate}
\end{flushleft}
\end{algorithm}

We now give the general form of Lemma~\ref{packpartlemma} with proof.

\begin{lemma}[\cite{Plotkin92,Thorup00,Young95}]
\label{packpartlemma2}
Given $0 < \epsilon < 1$ and an undirected unweighted graph $G$ with $m$ edges, $n$ vertices, and minimum cut $c$, Algorithm~\ref{packpart2} returns a weighted packing of weight at least $(1-\epsilon)c/2$ in $O(mc \log n)$ time.
\end{lemma}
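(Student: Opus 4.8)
The plan is to prove two things: (i) the multiset $P$ returned is a feasible weighted packing, and (ii) its weight $W$ is at least $(1-\epsilon)c/2$; the running time then falls out easily. Part (i) is immediate from the stopping rule. Write $\delta=\epsilon^2/(3\ln m)$ for the step size and suppose the loop body executes $k$ times, the $k$-th execution being the one that returns; then $P=\{T_1,\dots,T_{k-1}\}$, each tree carrying weight $\delta$, and $W=(k-1)\delta$. Since the algorithm did not return after the $(k-1)$-st iteration, every edge $e$ satisfies $\ell(e)\le 1$ at that moment, and $\ell(e)$ is exactly $\delta$ times the number of trees of $P$ containing $e$; hence the total weight of trees of $P$ through $e$ is at most $1$, the (unit) weight of $e$. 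So $P$ is feasible, and everything reduces to showing $k-1\ge(1-\epsilon)c/(2\delta)$.

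For (ii) I would run a multiplicative-weights (Lagrangian) argument with the exponential potential $\Phi_t=\sum_e e^{\alpha\ell_t(e)}$, where $\ell_t$ is the length function after $t$ iterations and $\alpha>0$ is a parameter to be tuned. Two structural facts drive it. First, the minimum spanning tree is built by the greedy matroid algorithm, so $T_{t+1}$ simultaneously minimizes $\sum_{e\in T}f(\ell_t(e))$ over all spanning trees $T$ for \emph{every} nondecreasing $f$, in particular for $f(x)=e^{\alpha x}$. Second, by Lemma~\ref{weighted-nashw} (an unweighted graph is a weighted one) $G$ has a weighted tree packing $\{y_T\}$ with $\sum_T y_T\ge c/2$ and $\sum_{T\ni e}y_T\le 1$; averaging over it shows that for any nonnegative edge function $h$, $\min_T\sum_{e\in T}h(e)\le\frac{\sum_T y_T\sum_{e\in T}h(e)}{\sum_T y_T}\le\frac{2}{c}\sum_e h(e)$. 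Applying this with $h(e)=e^{\alpha\ell_t(e)}$ and using $\ell_{t+1}=\ell_t+\delta\cdot\mathbf 1_{T_{t+1}}$ gives
\[
\Phi_{t+1}=\Phi_t+(e^{\alpha\delta}-1)\sum_{e\in T_{t+1}}e^{\alpha\ell_t(e)}\le\Phi_t\Bigl(1+\tfrac{2(e^{\alpha\delta}-1)}{c}\Bigr),
\]
so $\Phi_k\le\Phi_0\exp\!\bigl(\tfrac{2k(e^{\alpha\delta}-1)}{c}\bigr)=m\exp\!\bigl(\tfrac{2k(e^{\alpha\delta}-1)}{c}\bigr)$, using $\Phi_0=m$. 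On the other hand the $k$-th iteration returns precisely because some edge reaches $\ell_k(e)>1$, so $\Phi_k>e^{\alpha}$. Taking logarithms yields $k>\tfrac{c(\alpha-\ln m)}{2(e^{\alpha\delta}-1)}$.

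It remains to choose $\alpha$. Taking $\alpha=\Theta(\ln m/\epsilon)$ (concretely $\alpha=\sqrt6\,\ln m/\epsilon$, so that $u:=\alpha\delta=\sqrt{2/3}\,\epsilon$ is small) and substituting $\delta\ln m=\epsilon^2/3$, the bound becomes $k>\tfrac{c}{2\delta}\cdot\tfrac{u-\epsilon^2/3}{e^{u}-1}$, and one checks $\tfrac{u-\epsilon^2/3}{e^{u}-1}\ge 1-\epsilon$ for all $\epsilon\in(0,1)$; hence $W=(k-1)\delta\ge(1-\epsilon)c/2$. For the running time, every spanning tree crosses the minimum cut, so any feasible packing has weight at most $c$; thus $W=(k-1)\delta\le c$ and the loop runs $k-1\le c/\delta=O(c\log n/\epsilon^2)$ times (using $m<n^2$), which is $O(c\log n)$ for constant $\epsilon$. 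Each iteration is a single minimum-spanning-tree computation, done in $O(m)$ time by a linear-time MST algorithm, plus $O(n)$ bookkeeping; the total is $O(mc\log n)$.

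The delicate point — the step I expect to take the most care — is the final inequality $\tfrac{u-\epsilon^2/3}{e^{u}-1}\ge 1-\epsilon$: it must hold with the \emph{exact} constant $3$ in the step size (that constant was chosen for this), so one has to keep the second-order term of $e^{u}-1$ rather than only the linear term, and verify it uniformly for $\epsilon$ up to $1$. The two supporting facts — that greedy MST is simultaneously optimal for all monotone edge costs, and the $2/c$ averaging bound from the fractional Nash-Williams packing — are standard but should be stated explicitly, since the whole argument rests on them.
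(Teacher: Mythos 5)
Your proposal is correct, and it takes a genuinely more ambitious route than the paper: where the paper proves only the iteration count and the $O(mc\log n)$ running time and then explicitly \emph{defers} correctness (the bound $W\ge(1-\epsilon)c/2$) to the cited works of Thorup--Karger, Young, and Plotkin--Shmoys--Tardos, you reconstruct that correctness proof in full via an exponential-potential (multiplicative-weights) argument. The two supporting facts you isolate are exactly what drive the cited proofs: (a) an MST with respect to the current length function $\ell_t$ simultaneously minimizes $\sum_{e\in T}f(\ell_t(e))$ for any nondecreasing $f$, since a monotone transform cannot change which sets are MSTs; and (b) the Nash--Williams packing (Lemma~\ref{weighted-nashw}) gives $\min_T\sum_{e\in T}h(e)\le\frac{2}{c}\sum_e h(e)$ for any nonnegative $h$ by averaging. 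Feeding $h(e)=e^{\alpha\ell_t(e)}$ into the potential recursion and balancing $\alpha$ against the step size $\delta=\epsilon^2/(3\ln m)$ is a clean and standard width-dependent MWU calculation, and your numerics confirm the final inequality $\frac{u-\epsilon^2/3}{e^u-1}\ge 1-\epsilon$ with $u=\sqrt{2/3}\,\epsilon$ holds with slack over $(0,1)$. One small gloss worth making explicit: you derive $k>\frac{(1-\epsilon)c}{2\delta}$ but need $(k-1)\delta\ge(1-\epsilon)c/2$; the extra additive $\delta$ is absorbed by the strict slack in the inequality (or by observing $\delta\le c\,$, so the loss is negligible), but as written it is an off-by-one that should be noted. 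Overall your version is more in the spirit of the paper's stated goal of being self-contained, at the cost of length; the paper's version is terse precisely because it treats the packing guarantee as a black box from the literature.
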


\begin{proof}
On each iteration, the weight of some tree is increased by $\epsilon^2 /(3 \ln m)$. Since the weight of the resulting packing is bounded by $c$, there are at most $3 c \ln m/\epsilon^2 = O(c \log n)$ iterations. The bottleneck in each iteration is the time to compute a minimum spanning tree in $G$. With an $O(m)$ time minimum spanning tree algorithm~\cite{Karger95c} our final time complexity is $O(m c \log n)$; an alternative way to achieve this runtime when Algorithm~\ref{packpart2} is used in Algorithm~\ref{packing2} was shown in Section~\ref{karger}. Correctness is given via Thorup and Karger~\cite{Thorup00}, Young~\cite{Young95}, and Plotkin-Shmoys-Tardos~\cite{Plotkin92}.
\end{proof}

Our full procedure for obtaining $\Theta(\log n)$ spanning trees for the rest of the algorithm is given in Algorithm~\ref{packing}. We give a version of Algorithm~\ref{packing} with general epsilons in Algorithm~\ref{packing2}.

\begin{algorithm}
\caption{Obtain $\Theta(\log n)$ Spanning Trees for the $2$-respect Algorithm}
\label{packing2}
\begin{flushleft}
Let $d$ denote the exponent in the probability of success $1-1/n^d$. Let $\epsilon_1, \epsilon_2, \epsilon_3 > 0$ be constants of approximation such that $f = 3/2-(\frac{2+\epsilon_1}{2-\epsilon_1})(1+\epsilon_2)(1-\epsilon_3)^{-1} > 0$ and $(1+\epsilon_2)^{-1}(1-\epsilon_3) > 2/3$. Let $b = 3(d+2) \ln n/ {\epsilon_2}^2$.
\begin{enumerate}
  \item Form graph $G'$ from $G$ by first normalizing the edge weights of $G$ so the smallest non-zero edge weight has weight $1$, then multiplying each edge weight by $\epsilon_1^{-1}$ and rounding to the nearest integer. Let $U$ be an upper bound for the size of the minimum cut of $G'$.
  \item Initialize $c' \leftarrow U$. Repeat the following:
  \begin{enumerate}
    \item Construct $H$ in the following way: for each edge $e$ of $G'$, let $e$ have weight in $H$ drawn from the binomial distribution with probability $p = \min(b/c', 1)$ and number of trials the weight of $e$ in $G'$. Cap the weight of any edge in $H$ to at most $\lceil (1+\epsilon_2) 12b \rceil$.
    \item Run Algorithm~\ref{packpart2} on $H$ with approximation $\epsilon_3$, considering an edge of weight $w$ as $w$ parallel edges. There are three cases:
    \begin{enumerate}
    	\item If $p = 1$, set $P$ to the packing returned and skip to step 3.
    	\item If the returned packing is of weight $\frac{1}{2}(1-\epsilon_3) (1+\epsilon_2)^{-1} b$ or greater, set $c' \leftarrow c'/6$ and repeat steps 2a and 2b, setting $P$ to the packing returned and then proceeding to step 3.
    	\item Otherwise, repeat steps 2a and 2b with $c' \leftarrow c'/2$.
    \end{enumerate}
  \end{enumerate}
	\item Return $\lceil -d\ln n/\ln (1-f)\rceil$ trees sampled uniformly at random proportional to their weights from $P$.
\end{enumerate}
\end{flushleft}
\end{algorithm}

We give the generalization of Lemma~\ref{packinglemma} for Algorithm~\ref{packing2} below.

\begin{lemma}
	\label{packinglemma2}
	Algorithm~\ref{packing2} returns a collection of $\Theta(\log n)$ spanning trees of $G$ in time $O(m \log^3 n)$ such that the minimum cut of $G$ $2$-respects at least one tree in the collection with high probability.
\end{lemma}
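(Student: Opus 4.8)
The plan is to prove the two claims separately: correctness (with high probability the returned collection contains a tree that $2$-respects the minimum cut of $G$) and the $O(m\log^3 n)$ running time. I would start by tracking how cuts are distorted in step~1. Since every cut uses at least one edge and every edge of $G'$ has weight at least $\epsilon_1^{-1}$ before rounding, rounding to the nearest integer perturbs a cut of value $x$ by at most $\epsilon_1 x/2$ in either direction, so the minimum cut of $G$ corresponds to a cut of $G'$ of value at most $\alpha\cdot c_{G'}$, where $\alpha=(2+\epsilon_1)/(2-\epsilon_1)$ and $c_{G'}$ is the minimum cut of $G'$. Next I would observe that drawing each edge weight of $G'$ from the stated binomial distribution is exactly uniform edge sampling with probability $p$ on the multigraph underlying $G'$, so Lemmas~\ref{sample} and~\ref{overestimate} apply with $\epsilon=\epsilon_2$ and $\gamma=c'/c_{G'}$, and the argument of Section~\ref{karger} shows capping edge weights at $\lceil 12(1+\epsilon_2)b\rceil$ is allowed provided the minimum cut of $H$ stays below this cap.

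The core of the argument is the analysis of the halving loop. Initially $\gamma\ge 1$ since $c'=U\ge c_{G'}$. I would show: (i) whenever $\gamma\ge 6$, Lemma~\ref{overestimate} gives that $H$ has minimum cut below $(d+2)(\ln n)/\epsilon_2^2$ with high probability, so by Lemma~\ref{packpartlemma2} the packing is below the case~2(b)ii threshold $\frac12(1-\epsilon_3)(1+\epsilon_2)^{-1}b$ — the hypothesis $(1+\epsilon_2)^{-1}(1-\epsilon_3)>2/3$ is precisely what forces this threshold above $(d+2)(\ln n)/\epsilon_2^2$ — hence case~2(b)iii fires and $c'$ halves; and (ii) the first time $c'$ drops so that $\gamma\le 1$, either $p=1$ (case~2(b)i) or $p<1$ and Lemma~\ref{sample} gives minimum cut of $H$ at least $b/((1+\epsilon_2)\gamma)\ge b/(1+\epsilon_2)$, so the packing meets the threshold and case~2(b)ii fires. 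Because each halving changes $\gamma$ by exactly a factor $2$ and case~2(b)ii fires at the first $\gamma\le 1$ (or at some $\gamma<6$ where the packing already crosses the threshold, the bound $\gamma<6$ coming from the contrapositive of Lemma~\ref{overestimate}), the value of $\gamma$ when case~2(b)ii fires lies in $(1/2,6)$; dividing by $6$ yields a final $\gamma\in(1/12,1)$, so the re-run packing is computed on an $H$ satisfying Lemma~\ref{sample} with $\gamma=\Theta(1)$ and with minimum cut $c_H\le(1+\epsilon_2)b/\gamma<12(1+\epsilon_2)b$, so the cap is harmless there (capacity above the minimum cut is never used by a packing).

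To finish correctness: in case~2(b)i the packing $P$ has weight at least $(1-\epsilon_3)c_{G'}/2$ on $H=G'$, so only the step-1 distortion applies; in case~2(b)ii it has weight at least $(1-\epsilon_3)c_H/2$ on an $H$ in which, combining the step-1 distortion with Lemma~\ref{sample}, the minimum cut of $G$ maps to a cut whose value is at most the constant $\alpha(1+\epsilon_2)$ times $c_H$. In both cases Lemma~\ref{cuts}, applied in the relevant graph with $\beta\ge(1-\epsilon_3)/2$, yields that a fraction at least $f=3/2-\alpha(1+\epsilon_2)(1-\epsilon_3)^{-1}>0$ of the trees of $P$ (by weight) $2$-constrain the minimum cut of $G$; sampling $t=\lceil -d\ln n/\ln(1-f)\rceil=\Theta(\log n)$ trees independently misses all of them with probability at most $(1-f)^t\le 1/n^d$, and a union bound over the $O(\log n)$ uses of Lemmas~\ref{sample} and~\ref{overestimate} keeps the total failure probability $O(1/n^d)$. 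For the time bound, the choice of $U$ (Karger's maximum-spanning-tree bound, or the sum of weights incident to a vertex when weights are polynomially bounded) makes the loop run $O(\log n)$ times; each iteration costs $O(m\log n)$ for the binomial sampling (computing the cumulative distribution up to the $O(\log n)$ cap, then one draw per edge) plus $O(m\cdot c_H\cdot\log n)=O(m\log^2 n)$ to run Algorithm~\ref{packpart2} on $H$ via Lemma~\ref{packpartlemma2} and the minimum-of-parallel-edges implementation of Section~\ref{karger}, the cap forcing $c_H=O(\log n)$ except on a $1/n^{\Omega(1)}$-probability slice; outputting $\Theta(\log n)$ of the $O(\log^2 n)$ trees of $P$ is dominated, so the total is $O(m\log^3 n)$.

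I expect the main obstacle to be making the halving-loop analysis airtight: precisely locating the interval that contains $\gamma$ when case~2(b)ii fires, checking that the ensuing division by $6$ lands in the regime where Lemma~\ref{sample} applies while the cap $\lceil 12(1+\epsilon_2)b\rceil$ is simultaneously large enough not to disturb the packing there, and threading the constants $\alpha,\epsilon_2,\epsilon_3$ through Lemma~\ref{cuts} so the guaranteed fraction is exactly the $f$ defined in Algorithm~\ref{packing2}. By contrast, the running-time bound follows almost immediately from Lemma~\ref{packpartlemma2} and the implementation remarks already made.
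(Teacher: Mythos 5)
Your proposal follows essentially the same approach as the paper's proof: the same halving-loop analysis via Lemmas~\ref{sample} and~\ref{overestimate}, the same $\gamma\in(1/12,1)$ window after dividing by $6$, the same application of Lemma~\ref{cuts} with $\alpha\le\frac{2+\epsilon_1}{2-\epsilon_1}(1+\epsilon_2)$ and $\beta\ge\frac12(1-\epsilon_3)$, and the same union-bound bookkeeping over the $O(\log n)$ iterations. One small misattribution in the time bound: it is the sampling (Lemma~\ref{overestimate} for $\gamma\ge 6$, a Chernoff upper tail for $1/12<\gamma<6$), not the edge-weight cap, that keeps $\hat{c}=O(\log n)$ with high probability in every iteration --- the cap limits the number of parallel copies of any single edge but does not bound the value of a cut crossed by many edges; the paper also uses a geometric-series argument (cost roughly doubles per iteration) rather than your cruder $O(\log n)\times O(m\log^2 n)$ bound, though both yield $O(m\log^3 n)$.
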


\begin{proof}
We first prove correctness. Consider general epsilons $\epsilon_1, \epsilon_2, \epsilon_3 > 0$, where in Algorithm~\ref{packing}, $\epsilon_1 = 1/100$ is the real-weight approximation, $\epsilon_2 = 1/6$ is the approximation for Lemmas~\ref{sample} and~\ref{overestimate}, and $\epsilon_3 = 1/5$ is the approximation for Algorithm~\ref{packpart} to return a packing of size $(1-\epsilon_3)c/2$ or greater.

Suppose for a particular $c'$ that $c' \geq 6c$, where $c$ is the size of the minimum cut in $G'$. Then by Lemma~\ref{overestimate}, $H$ will have minimum cut of size less than $b/3=(d+2)\ln n/{\epsilon_2}^2$ with high probability. A maximum tree packing of $H$ will have weight at most $\hat{c}$, the weight of the minimum cut in $H$, and thus the weight of the tree packing found by Algorithm~\ref{packpart2} will be at most $b/3 < \frac{1}{2}(1-\epsilon_3) (1+\epsilon_2)^{-1} b$ because $(1+\epsilon_2)^{-1}(1-\epsilon_3) > 2/3$. Therefore Algorithm~\ref{packing2} will proceed to the next iteration with $c' \leftarrow c'/2$. Note that the overall probability of failure from any of the $O(\log n)$ iterations of this step is at most $O(\log n \cdot n^{-(d+2)}) \leq n^{-d}$ for sufficiently large $n$.

Now suppose Algorithm~\ref{packpart2} returns a tree packing of weight $\frac{1}{2}(1-\epsilon_3) (1+\epsilon_2)^{-1} b$ or greater. By the above, $c' < 6c$ with high probability. If $c' \leq c$, Lemma~\ref{sample} says that the weight of the minimum cut is at least $(1+\epsilon_2)^{-1} b$ with high probability, unless $p > 1$. In the latter case, this implies the weight of the minimum cut is $O(\log n)$ and there is no need to apply sampling to $G'$. Consider the former case. The tree packing is of weight at least $(1-\epsilon_3)$ times half the minimum cut. It follows that the tree packing will be of weight at least $\frac{1}{2}(1-\epsilon_3) (1+\epsilon_2)^{-1} b$. The consequence of this is that if a tree packing of this weight or greater is found in step 2b, in addition to the bound $c' < 6c$, we also know $c' > c/2$ with high probability, since whenever $c' \leq c$, Lemma~\ref{sample} says the packing will have weight at least $\frac{1}{2}(1-\epsilon_3) (1+\epsilon_2)^{-1} b$, and we decrease $c'$ by a factor of $2$ in each iteration. Therefore, if we set $c' \leftarrow c'/6$, then in the next iteration we will have $c/12 < c' < c$.

Now consider the next iteration when the tree packing is returned. In sampling $H$, we only preserve weights in $H$ up to $\lceil (1+\epsilon_2) \cdot 12b \rceil$. Since $c' > c/12$, the expected size of the minimum cut in $H$ is at most $12b = 12 \cdot 3(d+2)\ln n/{\epsilon_2}^2$. Thus, with high probability, by Lemma~\ref{sample}, the size of the minimum cut in $H$ is at most $(1+\epsilon_2) 12b$, and as explained previously, we can afford to remove the capacity of any edge beyond $(1+\epsilon_2) 12b$ without impacting the returned packing. Now by Lemma~\ref{cuts} with $\alpha \leq \frac{2+\epsilon_1}{2-\epsilon_1}(1+\epsilon_2)$ and $\beta \geq \frac{1}{2} (1-\epsilon_3)$, a fraction of at least $f = 3/2-(\frac{2+\epsilon_1}{2-\epsilon_1})(1+\epsilon_2)(1-\epsilon_3)^{-1}$ of the trees in the packing found will $2$-constrain the minimum cut of $G$. The probability that no tree in a sample of size $t$ $2$-constrains the minimum cut is $(1-f)^t$. Solving for $t$ in $(1-f)^t = n^{-d}$ yields $t = -d\ln n/\ln(1-f)$. Therefore with probability at least $1-1/n^d$, at least one tree in the returned sample will $2$-constrain the minimum cut.

Time complexity can be proven as follows. Sampling $H$ can be done in $O(m \log n)$ time, as explained previously. Algorithm~\ref{packpart2} runs in $O(m' \hat{c} \log^2 n)$ time using a textbook $O(m\log n)$ minimum spanning tree algorithm, where $\hat{c}$ is the value of the minimum cut in $H$ and $m'$ is the number of edges in $H$, where weighted edges are considered parallel unit weight edges. Due to the sampling procedure, $m' = O(m \log n)$. To reduce this complexity, we can either use a linear time minimum spanning tree algorithm~\cite{Karger95c} or the implementation trick given in Section~\ref{karger}. If we use the latter, we reduce the effective $m'$ needed in Algorithm~\ref{packpart2} to $O(m)$. Further, in expectation, the value of the minimum cut $\hat{c}$ of $H$ doubles in each iteration of Algorithm~\ref{packing2}. A high probability statement can be made via an argument similar to Lemma~\ref{overestimate}. Therefore the cost of running Algorithm~\ref{packpart2} doubles in each iteration, with the final cost being $O(m \log^3 n)$, since $\hat{c} = O(\log n)$ by Lemma~\ref{sample}. This is a geometric series, so the entire cost is $O(m \log^3 n)$, and so Algorithm~\ref{packing2} runs in $O(m \log^3 n)$ time with high probability.
\end{proof}

Since Algorithm~\ref{packpart2} returns $O(\log n)$ trees, we could avoid sampling trees from the weighted packing and instead return all of them. We keep the sampling in Algorithm~\ref{packing2} because, depending on the constants, sampling may require less trees. Further, the above version of Algorithm~\ref{packpart2} is more versatile in that the packing algorithm can be changed. Observe that the entire algorithm is still only correct with high probability, since we required sampling $G'$ to construct graph $H$. Finally, returning all trees from Algorithm~\ref{packpart2} does not actually allow us to relax $\epsilon_1$, $\epsilon_2$, or $\epsilon_3$. The condition $f = 3/2-(1+\epsilon_1)(1+\epsilon_2)(1-\epsilon_3)^{-1} > 0$ is satisfied for all values of $\alpha$ and $\beta$ that guarantee at least one tree in a weighted packing of weight $\beta c$ $2$-constrains a cut of weight $\alpha c$ given by Lemma~\ref{cuts}.

Algorithm~\ref{packing2} is slightly different than the approach taken by Karger~\cite{Karger00}. In particular, Karger sparsifies edges of $H$ to have $m' = O(n \log n)$ and replaces an $O(m \log n)$ time minimum spanning tree computation in the tree packing algorithm with an $O(m)$ one, avoiding the implementation trick of Gawrychowski et al.~\cite{Gawrychowski19}. This gives complexity $O(n \log^3 n)$ for finding the $\Theta(\log n)$ spanning trees. However, since the remaining part of the algorithm also takes $O(m \log^3 n)$ time, we avoid these optimizations to simplify our procedures.

\end{document}